\newcolumntype{C}{>{\centering\arraybackslash}X}
\numberwithin{equation}{section}
\theoremstyle{plain}
\newtheorem{theorem}{Theorem}
\newtheorem{example}{Example}
\newtheorem{lemma}{Lemma}
\newtheorem{remark}{Remark}
\newtheorem{corollary}{Corollary}
\newtheorem{proposition}{Proposition}
\newtheorem{definition}{Definition}
\def \bE {\mathbb{E}}
\def \bR {\mathbb{R}}
\def \var {\mathsf{Var}}
\newcommand\itemaone{\item[($\mathcal A1$):]}
\newcommand\itematwo{\item[($\mathcal A2$):]}
\newcommand\itemtaone{\item[($\tilde{\mathcal A}1$):]}
\newcommand\itemtatwo{\item[($\tilde{\mathcal A}2$):]}
\newcommand{\floor}[1]{{\left\lfloor {#1} \right \rfloor}}
\newcommand{\Prob}{\mathbb{P}}
\newcommand{\TV}{{\sf TV}}
\newcommand{\KL}{{\sf KL}}
\newcommand{\Bern}{\mathrm{Bern}}
\newcommand{\Poi}{\mathsf{Poi}}
\newcommand{\Nor}{\mathsf{N}}
\newcommand{\Indc}{\mathbf{1}}
\definecolor{myblue}{rgb}{.8, .8, 1}
\definecolor{mathblue}{rgb}{0.2472, 0.24, 0.6} % mathematica's Color[1, 1--3]
\definecolor{mathred}{rgb}{0.6, 0.24, 0.442893}
\definecolor{mathyellow}{rgb}{0.6, 0.547014, 0.24}
\newcommand{\calE}{{\mathcal{E}}}
\newcommand{\calX}{{\mathcal{X}}}
\newcommand{\dpex}{p_{\mathrm{EX}}}
\newcommand{\dpmh}{p_{\mathrm{MH}}}
\newcommand{\pex}{P_{\mathrm{EX}}}
\newcommand{\pmh}{P_{\mathrm{MH}}}
\crefname{lemma}{Lemma}{Lemmas}
\Crefname{lemma}{Lemma}{Lemmas}
\crefname{thm}{Theorem}{Theorems}
\Crefname{thm}{Theorem}{Theorems}
\theoremstyle{remark}
\begin{document}

\begin{frontmatter}
\title{On the Theoretical Properties of the Exchange Algorithm}
%\title{A sample article title with some additional note\thanksref{t1}}
\runtitle{On the Theoretical Properties of the Exchange Algorithm}
%\thankstext{T1}{A sample additional note to the title.}

\begin{aug}
%%%%%%%%%%%%%%%%%%%%%%%%%%%%%%%%%%%%%%%%%%%%%%
%%Only one address is permitted per author. %%
%%Only division, organization and e-mail is %%
%%included in the address.                  %%
%%Additional information can be included in %%
%%the Acknowledgments section if necessary. %%
%%%%%%%%%%%%%%%%%%%%%%%%%%%%%%%%%%%%%%%%%%%%%%
\author{\fnms{Guanyang } \snm{Wang}\ead[label=e1]{guanyang.wang@rutgers.edu}}
%%%%%%%%%%%%%%%%%%%%%%%%%%%%%%%%%%%%%%%%%%%%%%
%% Addresses                                %%
%%%%%%%%%%%%%%%%%%%%%%%%%%%%%%%%%%%%%%%%%%%%%%
\address{Department of Statistics and Biostatistics, Rutgers University,
\printead{e1}}
\end{aug}

\begin{abstract}
The exchange algorithm is one of the most popular extensions of the Metropolis--Hastings algorithm to sample from doubly-intractable distributions. However,  the theoretical exploration of the exchange algorithm is very limited. For example, natural questions like `Does exchange algorithm converge at a geometric rate?' or `Does the exchange algorithm admit a Central Limit Theorem?' have not been answered yet. In this paper, we study the theoretical properties of the exchange algorithm, in terms of asymptotic variance and convergence speed. We compare the exchange algorithm with the original Metropolis--Hastings algorithm and provide both necessary and sufficient conditions for the geometric ergodicity of the exchange algorithm. Moreover, we prove that our results can be applied to various practical applications such as location models, Gaussian models, Poisson models, and a large class of exponential families, which includes most of the practical applications of the exchange algorithm.  A central limit theorem for the exchange algorithm is also established.  Our results justify the theoretical usefulness of the exchange algorithm.
\end{abstract}

\begin{keyword}[class=MSC2010]
\kwd[Primary ]{60J22}
\kwd{62D05}
\kwd[; secondary ]{62D05}
\end{keyword}

\begin{keyword}
\kwd{Markov chain Monte Carlo}
\kwd{convergence}
\kwd{geometrically ergodic}
\end{keyword}

\end{frontmatter}
%%%%%%%%%%%%%%%%%%%%%%%%%%%%%%%%%%%%%%%%%%%%%%
%% Please use \tableofcontents for articles %%
%% with 50 pages and more                   %%
%%%%%%%%%%%%%%%%%%%%%%%%%%%%%%%%%%%%%%%%%%%%%%
%\tableofcontents

\section{Introduction}\label{sec: doubly intractable motivation}

Models with unknown normalizing constants arise frequently in many different areas. Examples include Ising models \cite{ising1925beitrag} in  statistical physics, autologistic models \cite{besag1972nearest} \cite{besag1974spatial} in spatial statistics, exponential random graph models \cite{robins2007introduction} in sociology, disease transmission models  \cite{potter2012estimating} in epidemiology, and so on. The corresponding statistical inference problem can  be formulated as follows. 

Suppose we were given data $x\in \mathcal X$ sampled from a family of probability densities (or probability mass functions) of the form:
\begin{equation}\label{eqn:doubly intractable}
p_\theta(x) = \frac{f_\theta(x)}{Z(\theta)}.
\end{equation}

We assume $f_\theta(x)$ can be easily evaluated but the normalizing function $Z(\theta) = \int_{\mathcal X}  f_\theta(x) dx$ is computationally intractable. Examples include:
\begin{example}[Ising Model]\label{eg:ising}
	Consider a graph $G = (V, E)$ with $n$ nodes, each vertex $i$ is assigned with a spin $\sigma_i$, which is either $1$ or $-1$. A spin configuration $\sigma \in \{-1, 1\}^n$ is an assignment of spins to all the graph vertices.  An Ising model  on G is defined by the following Boltzmann distributions over all possible configurations:
	\begin{equation}\label{eqn:ising}
	\Prob_\theta(\sigma) = \frac{e^{-\theta H(\sigma)}}{Z(\theta)},
	\end{equation}
	where $H(\sigma) = - \sum_{(i,j)\in E} J_{i,j}\sigma_i \sigma_j - M \sum_{i\in V} \sigma_i$ is  the Hamiltonian function, $J_{i,j}$ is  the interaction between spin $i$ and $j$, $M$ is  the magnetic moment, and $Z(\theta) = \sum_{\sigma} e^{-\theta H(\sigma)}$ is the partition function. As there are $2^n$ different possible spin configurations, the normalizing constant is usually computationally intractable for moderately large $n$.
\end{example}

\begin{example}[Exponential Random Graph Model]\label{eg:ergm}
	Exponential random graph models are a family of probability distributions on graphs. Let $\mathcal G_n$ be the set of all simple, undirect graphs without loops or multiple edges on $n$ vertices. Consider the following distribution on $\mathcal G_n$:
	\begin{equation}\label{eqn:ergm}
	\Prob_\theta(G = g) = \frac{e^{\theta s(g)}}{Z(\theta)},
	\end{equation}   
	where $s$ is a sufficient statistics defined on $\mathcal G_n$. This may be chosen as the degrees of the vertices, the number of edges, the number of triangles, or other sub-graph counts, $Z(\theta) = \sum_{g\in \mathcal G_n} e^{\theta s(g)}$ is the normalizing constant. As there are up to $2^{\binom{n}{2}}$ possible graphs, $Z(\theta)$ is also computationally intractable for moderately large $n$.
	
\end{example}

It is of natural interest to do inference on the parameter $\theta$. However, the classical route for statistical inference (maximum likelihood approach) can not be applied due to the intractability of $Z(\theta)$. Current frequentist solutions are mainly based on approximation methods such as  pseudo-likelihood approximation \cite{besag1974spatial},   MCMC-MLE \cite{geyer1991markov}, stochastic approximation \cite{younes1988estimation}.  Usually frequentist approaches are computationally efficient but do not have  theoretical guarantees. In fact, it is known that there are cases  these approximation methods perform poorly, see \cite{cucala2009bayesian} for discussion. 

In a Bayesian prospective, suppose  a prior $\pi(\theta)$ is adopted. The posterior can be formally calculated by $\pi(\theta| x) \propto \pi(\theta) p_\theta(x)$. Then a central part of Bayesian inference is to understand the posterior distribution. For example, if one is able to (asymptotically) draw samples from the posterior (usually by Markov-chain Monte Carlo  algorithms), then the distribution of any function $h(\theta)$ of interest can be  estimated by
$
\hat h= \frac{1}{N} \sum_{i= 1}^{N} h(\theta_i),
$
where $\theta_1, \cdots, \theta_N$ are samples drawn from $\pi(\theta|x)$.

However, the unknown normalizing function $Z(\theta)$ makes MCMC sampling pretty challenging. Consider a standard Metropolis--Hastings (MH) algorithm with proposal density $q$, in each iteration the acceptance probability is of the form:
\begin{equation}\label{eqn:MH_acceptance ratio}
\min\bigg(1, \frac{q(\theta',\theta)\pi(\theta'|x)}{q(\theta,\theta')\pi(\theta|x)}\bigg) =\min\bigg(1,\frac{q(\theta',\theta)\pi(\theta')f_{\theta'}(x)}{q(\theta,\theta')\pi(\theta)f_{\theta}(x)}\cdot \frac{Z(\theta)}{Z(\theta')}\bigg).
\end{equation}
This can not be directly computed as the ratio  $\frac{Z(\theta)}{Z(\theta')}$ is unknown. The posterior distribution $\pi(\theta|x)$ is often referred to as a \textbf{doubly-intractable} distribution as the Metropolis--Hastings algorithm is accurate only after infinity steps, and each iteration includes an infeasible calculation \cite{murray2012mcmc}.

One of the most popular methods to resolve this issue is the exchange algorithm \cite{murray2012mcmc} proposed by Murray et al. Roughly speaking, the exchange algorithm is a new MCMC algorithm which uses an auxiliary variable  at each step to estimate the unknown ratio $Z(\theta)/Z(\theta')$  (see Algorithm \ref{alg:exchange} for details). The algorithm is easy to implement and is asymptotically exact. 

The exchange algorithm is widely used in sampling from doubly-intractable distributions. However, there are very limited studies about its theoretical properties. One fundamental problem with the MCMC algorithm is its convergence rate. On the one hand, an a-priori bound on how long the chain should run to converge within any given accuracy would be helpful to guide practical uses. On the other hand, present theories show there are deep connections between the convergence rate and Markov-chain Central Limit Theorem. A chain with a sub-geometric convergence rate may fail to admit the Central Limit Theorem and  the estimator derived by Markov chain samples may even have infinite variance.

This motivates us to study the theoretical properties of the exchange algorithm.  Our main contributions include:
\begin{itemize}
	
	\item We prove several comparison-type results between the exchange algorithm and the original Metropolis--Hastings algorithm. Our results compare the exchange algorithm and the Metropolis--Hastings algorithm in terms of asymptotic variance and convergence rate. 
	\item We provide sufficient conditions to ensure the geometric ergodicity of the exchange algorithm. In particular, when the proposal distribution is symmetric, our assumptions to ensure the geometric ergodicity of the exchange chain are  weaker than the previous results, which allows us to study the convergence rate of the exchange algorithm on unbounded parameter space, and apply our results on many practical models.

	\item  We apply our theoretical results on a variety of practical examples such as location models, Ising models, exponential random graph models which include many of the practical applications of  exchange algorithms. Our results justify the theoretical usefulness of the exchange algorithm in practical situations. To our best knowledge, this is the first result to establish geometric ergodicity for the exchange algorithm on non-compact parameter space.
	\item We prove a Central Limit Theorem for the exchange algorithm given  it is geometrically ergodic. We also provide lower and upper bounds for the asymptotic variance of the exchange algorithm.
\end{itemize}

The remainder of the paper is organized as follows. In Section \ref{sec:preliminaries} we set up preliminary definitions, review current related results, and introduce the notations we used in this paper. Our main findings are stated and proved in Section \ref{sec:theoretical exchange algorithm}.  Section \ref{sec:conclusion} concludes this paper and provides further possible directions. 

We also briefly summarize our  theoretical results. In this paper we study the asymptotic variance and the convergence rate of the exchange algorithm, with an emphasis on its theoretical properties on practical models. Theorem \ref{thm:asymptotic variance} in Section \ref{subsec:asymptotic variance} shows the asymptotic variance of the original chain  is always no larger than the exchange algorithm. Our proof relies on the Peskun's ordering between the exchange algorithm and the original chain (Lemma \ref{lem:Peskun_order}). For convergence rate analysis, Theorem \ref{thm:lazy MH chain} shows variance bounding (or admitting a positive right spectral gap) of the original chain is a necessary condition for the geometric ergodicity of the exchange chain (this condition is not sufficient, see Example \ref{eg: exponential gamma} for a counterexample). Theorem \ref{thm:geometric ergodicity inherit, uniform case} and Theorem \ref{thm:Geometric ergodic inhirit} are the two main theorems concerning the inheritance of geometric ergodicity for the exchange algorithm. Theorem \ref{thm:geometric ergodicity inherit, uniform case} shows that if the original Metropolis--Hastings chain is geometrically ergodic, and the likelihood ratio function is uniformly non-negligible (see Definition \ref{def: non-negligible} for a precise definition), then the exchange algorithm is also geometrically ergodic. The condition in Theorem \ref{thm:geometric ergodicity inherit, uniform case} is usually satisfied when the parameter space has a compact closure, but often fails to  hold when the parameter space is unbounded. Theorem \ref{thm:Geometric ergodic inhirit} proves that the exchange algorithm can inherit the geometric ergodicity of the original algorithm under a much weaker condition when the proposal kernel is symmetric. Section \ref{subsec:applications of general state convergence} shows Theorem \ref{thm:Geometric ergodic inhirit} can be applied to many practical models, including location models, Poisson models, a large subset of exponential family models which contains ERGMs and Ising Models. 
The proof of Theorem \ref{thm:Geometric ergodic inhirit} relies on a `change of kernel' technique, which connects the geometric ergodicity results for Random-walk Metropolis--Hastings algorithms  \cite{mengersen1996rates}  \cite{roberts1996geometric} \cite{jarner2000geometric} with the Markov chain comparison results developed by Roberts and Rosenthal \cite{roberts2008variance}.
The `change of kernel' trick is new to the author's best knowledge, which may be of independent interest. 

For the experienced reader, here is a brief discussion between our results and the results developed by Andrieu
and Roberts \cite{andrieu2009pseudo}, and Andrieu and Vihola \cite{andrieu2015convergence} who studies the theoretical properties of the Pseudo-marginal MCMC algorithms, which is another popular approach to tackle the doubly-intractable distributions. The asymptotic variance results  in our paper (Lemma \ref{lem:Peskun_order}, Theorem \ref{thm:asymptotic variance}) are similar to Theorem 7 in \cite{andrieu2015convergence}. However, the exchange algorithm is dominated by the original Metropolis--Hastings algorithm in Peskun's ordering, but for Pseudo-marginal MCMC algorithms there is no such general ordering as its parameter space is defined as an enlarged product space. The conditions for convergence results proved in Theorem \ref{thm:geometric ergodicity inherit, uniform case} are similar to Theorem 8 in \cite{andrieu2009pseudo} and Section 3 in \cite{andrieu2015convergence}.  In the Pseudo-marginal MCMC papers, it is  required that the weight function is uniformly bounded.  In this paper, it is required that the likelihood ratio function is uniformly non-negligible.
When the proposal distribution is symmetric, our results and methods for studying the `geometric ergodicity inheritance' of the exchange algorithm are different from the previous results. For example, Theorem 38 of \cite{andrieu2015convergence} gives a condition on the `polynomial ergodicity' instead of `geometric ergodicity' of the Pseudo-marginal algorithms when the original chain is a Random-walk Metropolis Hastings chain. Moreover, we have a specific focus on the applicability of our theoretical results on practical models. Therefore we have a separate section (Section \ref{subsec:applications of general state convergence}) discussing the applicability of our results on many  practical situations where the exchange algorithms are used.

\section{Preliminaries}\label{sec:preliminaries}
\subsection{The Exchange Algorithm}\label{subsec:exchange algorithm}
Let $\Theta$ be the parameter space and  $\pi(\theta|x)$ be the target density on $\Theta$, the standard Metropolis--Hastings Algorithm (MHMC) is described in Algorithm \ref{alg:MHMC}.
\begin{algorithm}
	\caption{Metropolis--Hastings Algorithm (MHMC)}\label{alg:MHMC}
	\hspace*{\algorithmicindent} \textbf{Input:} initial setting $\theta$, number of iterations $T$, Markov transition kernel $q$ \\
	
	\begin{algorithmic}[1]
		
		\For{$t= 1,\cdots T$}
		\State Propose $\theta'\sim q(\theta,\theta')$
		\State Compute $$a(\theta,\theta') = \frac{q(\theta',\theta)\pi(\theta'|x)}{q(\theta,\theta')\pi(\theta|x)} $$
		\State Draw $r \sim \text{Uniform}[0,1]$
		\State \textbf{If} $(r< a)$ \textbf{then} set $\theta =\theta'$
		\EndFor
	\end{algorithmic}
\end{algorithm}

However, in our setting the posterior density  has  expression $
\pi(\theta|x) \propto \pi(\theta)  \frac{f_\theta(x)}{Z(\theta)},
$
where $Z(\theta)$ is an unknown function of $\theta$. Therefore, at each step the acceptance ratio 
\begin{equation}
\min\bigg(1, \frac{q(\theta',\theta)\pi(\theta'|x)}{q(\theta,\theta')\pi(\theta|x)}\bigg) =\min\bigg(1,\frac{q(\theta',\theta)\pi(\theta')f_{\theta'}(x)}{q(\theta,\theta')\pi(\theta)f_{\theta}(x)}\cdot \frac{Z(\theta)}{Z(\theta')}\bigg),
\end{equation}

contains an intractable term $Z(\theta)/Z(\theta')$. 

The exchange algorithm described below in Algorithm \ref{alg:exchange} is a clever extension of MHMC which uses an auxiliary variable at each step to estimate the unknown ratio of $Z(\theta)/Z(\theta')$. 
\begin{algorithm}[htbp!]
	\caption{Exchange Algorithm }\label{alg:exchange}
	\hspace*{\algorithmicindent} 
	\textbf{Input:} initial setting $\theta$, number of iterations $T$ \\
	\begin{algorithmic}[1]
		
		\For{$t= 1,\cdots T$}
		\State Generate $\theta'\sim q(\theta, \theta')$
		\State Generate an auxiliary variable $w\sim p_{\theta'}(w) = f_{\theta'}(w)/Z(\theta')$
		\State Compute $$a(\theta,\theta',w) =  \frac{\pi(\theta')q(\theta',\theta)f_{\theta'}(x)}{\pi(\theta)q(\theta,\theta')f_{\theta}(x)}\cdot \frac{f_\theta(w)}{f_{\theta'}(w)}$$
		\State Draw $r \sim \text{Uniform}[0,1]$
		\State \textbf{If} $(r< a)$ \textbf{then} set $\theta =\theta'$
		\EndFor
	\end{algorithmic}
\end{algorithm}

If we compare the exchange algorithm with the Metropolis--Hastings Algorithm (Algorithm \ref{alg:MHMC}), it turns out the only difference is the uncomputable ratio $Z(\theta)/Z(\theta')$ appeared in Algorithm  \ref{alg:MHMC} is replaced by $f_\theta(w)/f_{\theta'}(w)$ in Algorithm \ref{alg:exchange}, where $w$ is the auxiliary variable generated in each step. Roughly speaking, the exchange algorithm uses the importance sampling-type estimator $f_\theta(w)/f_{\theta'}(w)$ to estimate $Z(\theta)/Z(\theta')$  and plugs it into the uncomputable term. The exchange algorithm is easy to implement and is simple in the sense that it differs from the original Metropolis--Hastings algorithm by only an extra auxiliary variable in one step. Meanwhile, the estimator is cleverly designed so the correct stationary distribution is still preserved. 

Practitioners also use the exchange algorithm in Ising Models \cite{park2018bayesian}, Exponential Random Graph Model (ERGM) \cite{caimo2011bayesian}, spatial autoregressive (SAR) model \cite{hsieh2016social}, spatial interaction point process \cite{park2018bayesian}, Bayesian hypothesis testing \cite{diaconis2018bayesian} and so on. However, theoretical studies for doubly intractable distributions and the exchange algorithm are still very limited. 
Murray et al. proved the detailed-balance equation holds for the exchange algorithm in their original paper \cite{murray2012mcmc}. Nicholls et al. \cite{nicholls2012coupled} gave a sufficient condition for a minorization condition of the exchange chains.  Habeck et al. \cite{habeck2020stability} provided stability properties of doubly-intractable distributions. 
Medina-Aguayo et al. \cite{medina2020perturbation} provided guarantees  for the Monte Carlo within Metropolis algorithm for approximate sampling of doubly intractable distributions. Andrieu et al. \cite{andrieu2018utility} introduced a new class of MCMC algorithms, which contains the exchange algorithm as a special case, and discussed their asymptotic variance properties comparing with the original algorithms. However, it seems the only existing result concerning the convergence rate of the exchange algorithm is in \cite{nicholls2012coupled}, but  it only discussed the uniformly ergodic case, and the proposed conditions seem to be strong and are generally not satisfied in an unbounded parameter space, which is of practitioner's main interest. For example, geometric ergodicity is the usual notion of a chain having a `good' convergence rate. But there is no result showing whether the exchange algorithm is  geometrically ergodic or not. This motivates us to study the theoretical properties of the exchange algorithm.

\subsection{Markov Chain Convergence}
Let $X_0, X_1, \dotsc, X_n$ be a reversible, $\phi$-irreducible and aperiodic Markov chain with  stationary distribution $\pi$.  Let  $P$ be its transition kernel on a state space with countably generated $\sigma$ algebra. It is standard in  Markov chain theory (see, for example, Meyn and Tweedie \cite{meyn2012markov}, Chapter 13) that we have 
$
\|P^n(x,\cdot) - \pi\|_\TV \rightarrow 0
$
as $n\rightarrow \infty$ for $\pi$-a.e. $x$, where $\TV$ stands for the  total-variation distance. The reversible, irreducible and aperiodic conditions are usually easy to check and are generally satisfied in Metropolis--Hastings algorithms, so we will assume all these conditions are  satisfied in this paper henceforth. 

Furthermore, a Markov chain  is said to be \textbf{uniformly ergodic} if 
\begin{equation}\label{eqn:uniform ergodic}
\sup_x \|P^n(x,\cdot) - \pi\|_\TV \leq C r^n
\end{equation}
for $C > 0$ and $0 < r < 1$,
and  \textbf{geometrically ergodic} if there exists a finite function $C(x)$ such that
\begin{equation}\label{eqn:geometric ergodic}
\|P^n(x,\cdot) - \pi\|_\TV \leq C(x) r^n,
\end{equation}
for some $0<r<1$ and $\pi$-a.e.  $x$ .

Geometric  ergodicity plays an important role in the theory of nonasymptotic convergence estimates for MCMC algorithms, as well as the existence of the central limit theorem for Markov chains. Various conditions for the geometric ergodicity are discussed in \cite{rosenthal1995minorization}, \cite{mengersen1996rates}, \cite{roberts1996geometric}, \cite{roberts1997geometric},  \cite{jarner2000geometric} under different settings. We also summarize several criterias for geometric and uniform ergodicity in the next two theorems. The following theorems are taken from Chapter 15 and Chapter 16 of \cite{meyn2012markov}, Theorem 1.3 and 1.4 of \cite{mengersen1996rates}, and  Proposition 1 and Theorem 2 of  \cite{roberts1997geometric}:

\begin{theorem}[Uniform Ergodicity]\label{thm:uniform ergodicity}
	For a Markov chain with transition kernel $P$, state space $\bR^d$ and stationary distribution $\pi$, the followings are equivalent:
	\begin{enumerate}
		\item  A minorization condition holds for the whole space $\mathbb R^d$, i.e., there exists an integer $n_0 > 0$, $\delta > 0$, and a probability measure $\nu$ such that, for any $x\in \mathbb R^d$,
		$P^{n_0}(x,\cdot) \geq \delta \nu(\cdot)$
		\item The chain $P$ is uniformly ergodic, and
		$
		\|P^n(x,\cdot) - \pi\|_\TV \leq (1-\delta)^\floor{n/n_0}.
		$
	\end{enumerate}
\end{theorem}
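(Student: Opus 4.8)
This is a standard equivalence, and the plan is to prove the two implications by separate classical routes: $(1)\Rightarrow(2)$ via a coupling built on Nummelin‑type splitting (which simultaneously yields the explicit rate), and $(2)\Rightarrow(1)$ via the classical equivalence of uniform ergodicity with Doeblin's condition and with the whole space being a small set (Chapters~15--16 of \cite{meyn2012markov}, together with Proposition~1 and Theorem~2 of \cite{roberts1997geometric}).

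\noindent\emph{The direction $(1)\Rightarrow(2)$.} Assuming $P^{n_0}(x,\cdot)\ge\delta\,\nu(\cdot)$ for every $x\in\bR^d$, I would write $P^{n_0}(x,\cdot)=\delta\,\nu(\cdot)+(1-\delta)R(x,\cdot)$ with $R$ a Markov kernel, and run a coupling on the lag‑$n_0$ skeleton: set $X_0=x$, $Y_0\sim\pi$; at each skeleton step flip an independent coin with success probability $\delta$, on success draw a \emph{single} point $Z\sim\nu$ and send both skeletons to $Z$, and on failure advance them by $R$; once the skeletons agree, drive the two full chains with identical innovations so they coincide thereafter. Since $\pi$ is stationary, $Y_t\sim\pi$ for all $t$, so the coupling inequality gives $\|P^n(x,\cdot)-\pi\|_\TV\le\Prob(X_n\ne Y_n)$, and coalescence is missed at each of the $\floor{n/n_0}$ skeleton steps with probability at most $1-\delta$ independently, whence $\Prob(X_n\ne Y_n)\le(1-\delta)^{\floor{n/n_0}}$. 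Taking $\sup_x$ gives the bound in (2); bounding $(1-\delta)^{\floor{n/n_0}}\le(1-\delta)^{-1}\big((1-\delta)^{1/n_0}\big)^{n}$ recognizes this as uniform ergodicity in the sense of \eqref{eqn:uniform ergodic}.

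\noindent\emph{The direction $(2)\Rightarrow(1)$.} Uniform ergodicity means $\sup_x\|P^n(x,\cdot)-\pi\|_\TV\to0$, so I would fix $n_0$ with $\beta:=\sup_x\|P^{n_0}(x,\cdot)-\pi\|_\TV<1$. Decomposing $P^{n_0}(x,\cdot)-\pi$ into Jordan parts, its negative part is dominated by $\pi$ and has mass at most $\beta$, so $P^{n_0}(x,\cdot)\ge\pi-\rho_x$ with $0\le\rho_x\le\pi$ and $\rho_x(\bR^d)\le\beta$; convolving once more yields $P^{2n_0}(x,\cdot)\ge\pi-\mu_x$ with $0\le\mu_x\le\pi$, $\mu_x(\bR^d)\le\beta$. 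Iterating this smoothing step and combining with $\phi$‑irreducibility and aperiodicity — in the form of the Doeblin‑condition argument of \cite{meyn2012markov}, Chapter~16 — upgrades these $x$‑dependent lower bounds to a single minorizing probability measure $\nu$ and a $\delta>0$ valid for all $x$, which is exactly (1).

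\noindent\emph{Where the difficulty lies.} The forward direction is essentially bookkeeping once the block coupling is set up correctly; the two subtleties are that coalescence must be produced by \emph{one} shared $\nu$‑draw (two independent draws would not make the copies equal) and that the copies must be pinned together after they first meet, so that the exponent $\floor{n/n_0}$ is not lost. The real work is the converse: one cannot simply take $\nu$ proportional to the pointwise infimum $\inf_x\,dP^{n_0}(x,\cdot)/d\lambda$ of the transition densities, because that infimum can be identically zero even when $\sup_{x,y}\|P^{n_0}(x,\cdot)-P^{n_0}(y,\cdot)\|_\TV$ is arbitrarily small (a sliding‑bump construction exhibits this). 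Circumventing it requires passing to a power of the kernel, where the convolutions wash out these zeros, and handling the measurability and absolute‑continuity bookkeeping on $\bR^d$; this is the technical heart of the argument and the step I expect to consume most of the effort, though it is entirely classical.
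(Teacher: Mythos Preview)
The paper does not give its own proof of this statement: Theorem~\ref{thm:uniform ergodicity} is quoted as a standard preliminary result, with attribution to Chapters~15--16 of \cite{meyn2012markov}, \cite{mengersen1996rates}, and \cite{roberts1997geometric}. So there is no in-paper argument to compare your proposal against; you are supplying a proof where the paper simply cites the literature.

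Your $(1)\Rightarrow(2)$ direction is the standard splitting/coupling argument and is correct, including the two subtleties you flag (one shared $\nu$-draw, and keeping the chains coupled after coalescence).

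Your $(2)\Rightarrow(1)$ direction, however, is not really an argument. The Jordan decomposition step is fine and gives $P^{n_0}(x,\cdot)\ge\pi-\rho_x$ with $\rho_x(\bR^d)\le\beta$, but the ``convolving once more'' step does not improve anything: you land back at $P^{2n_0}(x,\cdot)\ge\pi-\mu_x$ with the \emph{same} mass bound $\mu_x(\bR^d)\le\beta$, and no amount of iteration of this particular inequality makes the residual $x$-independent or drives its mass to zero. What actually does the work is the part you defer to \cite{meyn2012markov}, Chapter~16 --- so in the end you are citing the same source the paper cites. If you want a self-contained route, the cleaner path is: use $\phi$-irreducibility to produce a genuine small set $C$ with $\pi(C)>0$ and $P^m(y,\cdot)\ge\epsilon\,\nu(\cdot)$ for $y\in C$; then pick $n_0$ with $\sup_x\|P^{n_0}(x,\cdot)-\pi\|_\TV<\pi(C)$, so that $P^{n_0}(x,C)\ge\pi(C)-\beta=:c>0$ uniformly in $x$; composing gives $P^{n_0+m}(x,\cdot)\ge c\epsilon\,\nu(\cdot)$ for every $x$, which is the desired whole-space minorization. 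This avoids the ``sliding-bump'' obstruction you correctly worried about, and makes transparent where irreducibility enters.
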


\begin{theorem}[Geometric Ergodicity]\label{thm:geometrically ergodicity}
	For a reversible Markov chain with transition kernel $P$, state space $\bR^d$, and stationary distribution $\pi$, the followings are equivalent:
	\begin{enumerate}
		\item $P$ is geometrically ergodic
		\item There exists a function $V \geq 1$, finite at least for one point, and a measurable set $C$, such that for some $\lambda < 1$, $b < \infty$:
		\begin{equation}\label{eq:drift-minorization}
		PV(x) \leq \lambda V(x) + b \Indc_C(x)
		\end{equation}
		for all $x$
		\item There exists $0 < r <1$  such that $\sigma(P) \subset [-r, r]$.
		Here $\sigma(P):= \{\lambda: P-\lambda I ~\text{not invertible} \}$.  Here  $P$ is viewed as an operator on $L^2_0(\pi):=\{f: \bE_\pi(f^2) < \infty, \bE_\pi(f) = 0\}$, where $\pi$ is the stationary distribution.
	\end{enumerate}
\end{theorem}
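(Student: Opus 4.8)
The plan is to prove the cycle $(2)\Rightarrow(1)\Rightarrow(3)\Rightarrow(1)$ together with $(1)\Rightarrow(2)$, leaning on the drift/minorization theory of Meyn and Tweedie \cite{meyn2012markov} whenever condition (2) is involved and on the self-adjointness of $P$ on $L^2_0(\pi)$ (a consequence of reversibility) whenever condition (3) is involved. Throughout I would use the standing assumptions that $P$ is $\phi$-irreducible and aperiodic --- so that some petite set carries a minorization --- and that reversibility makes $P$ a bounded self-adjoint operator on $L^2_0(\pi)$ with spectrum a closed subset of $[-1,1]$. For $(2)\Rightarrow(1)$ I would start from $PV\le \lambda V+b\Indc_C$ with $\lambda<1$, reduce to the case where $C$ is petite (replace $C$ by its intersection with a petite set, enlarging $b$ and $V$ if necessary), and run the Nummelin splitting construction: the drift inequality converts into a geometric moment bound $\bE_x[r_0^{-\tau_C}]\lesssim V(x)$ on the return time $\tau_C$ for some $r_0>1$, the minorization gives a uniform regeneration probability on $C$, and a coupling of two split chains yields $\|P^n(x,\cdot)-\pi\|_\TV\le C(x)r^n$ with $C(x)=O(V(x))$ and $r<1$. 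This is essentially Theorem~15.0.1 of \cite{meyn2012markov}, so here I am mostly citing and assembling.

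The substantive step is $(1)\Rightarrow(3)$, where reversibility is indispensable. First I would invoke the converse half of the same Meyn--Tweedie theorem to extract from geometric ergodicity a drift function $V$ and a petite set $C$, hence $V$-uniform ergodicity; in particular this gives $\|P^n f-\pi(f)\|_2\le M(f)\rho^n$ for every $f\in L^2_0(\pi)$ (proved first for bounded $f$, then extended by density), with a fixed $\rho<1$ and a finite, $f$-dependent constant $M(f)$. It then remains to promote this $f$-by-$f$ decay to a uniform operator bound. Since $P=P^{\ast}$ on $L^2_0(\pi)$, the spectral theorem furnishes a projection-valued measure $E$ with $\|P^n f\|_2^2=\int_{[-1,1]}\mu^{2n}\,\diff\langle E_\mu f,f\rangle$; were the spectral radius $\beta:=\sup\{|\mu|:\mu\in\sigma(P)\}$ equal to $1$, I would pick a nonzero $f$ whose spectral measure is concentrated on $[\beta-\epsilon,\beta]\cup[-\beta,-\beta+\epsilon]$, forcing $\|P^n f\|_2\ge(1-\epsilon)^n\|f\|_2$ and contradicting $\|P^n f\|_2\le M(f)\rho^n$ as soon as $1-\epsilon>\rho$. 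Hence $\beta<1$, i.e. $\sigma(P)\subset[-r,r]$ for some $r<1$; this is the reasoning behind Proposition~1 and Theorem~2 of \cite{roberts1997geometric}.

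To close the loop, $(3)\Rightarrow(1)$ goes as follows: $\sigma(P)\subset[-r,r]$ and self-adjointness give $\|P^m g\|_2\le r^m\|g\|_2$ for all $g\in L^2_0(\pi)$; a Chapman--Kolmogorov/Fubini computation shows that for $\pi$-a.e.\ $x$ the density $p_2(x,\cdot)$ of $P^2(x,\cdot)$ with respect to $\pi$ lies in $L^2(\pi)$, so writing $p_n(x,\cdot)-1=P^{n-2}\!\left(p_2(x,\cdot)-1\right)$ and using $\|\cdot\|_{L^1(\pi)}\le\|\cdot\|_{L^2(\pi)}$ one obtains
\[
\|P^n(x,\cdot)-\pi\|_\TV=\tfrac12\,\|p_n(x,\cdot)-1\|_{L^1(\pi)}\le\tfrac12\,r^{\,n-2}\,\|p_2(x,\cdot)-1\|_{L^2(\pi)}=:C(x)\,r^n,
\]
which is geometric ergodicity. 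Finally $(1)\Rightarrow(2)$ is again the converse direction of Theorem~15.0.1 in \cite{meyn2012markov}: a drift function is built (up to a constant) as the geometric return-time moment $x\mapsto\bE_x[r_0^{-\tau_C}]$ to a petite set $C$, and geometric ergodicity is precisely what makes this finite and verifies the drift inequality.

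\textbf{Expected main obstacle.} The hard part is $(1)\Rightarrow(3)$: converting a total-variation rate --- which only tests $P^n$ against indicator functions or initial distributions, and with a starting-point-dependent constant --- into a genuine spectral gap for $P$ on all of $L^2_0(\pi)$. Reversibility is what makes this possible (the implication fails for general non-reversible chains), but the delicate point is ruling out spectral mass accumulating at $\pm1$ without an associated invariant or period-two eigenfunction; the spectral-theorem contradiction above, fed by the $V$-uniform ergodicity extracted in the first step, is exactly what handles this.
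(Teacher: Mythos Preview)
Your proposal is correct, and in fact you have done more than the paper does: the paper states this theorem as a preliminary result and simply attributes it to Chapters~15--16 of Meyn and Tweedie \cite{meyn2012markov}, Theorems~1.3--1.4 of Mengersen and Tweedie \cite{mengersen1996rates}, and Proposition~1 and Theorem~2 of Roberts and Tweedie \cite{roberts1997geometric}, with no proof given. Your sketch assembles exactly these ingredients --- the drift/minorization equivalence from \cite{meyn2012markov} for $(1)\Leftrightarrow(2)$ and the spectral-gap characterization from \cite{roberts1997geometric} for $(1)\Leftrightarrow(3)$ --- so there is nothing to compare against beyond the citations themselves.

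One minor caution on your $(3)\Rightarrow(1)$ step: the assertion that $P^2(x,\cdot)$ has an $L^2(\pi)$ density for $\pi$-a.e.\ $x$ requires some care in general (Roberts and Rosenthal handle this via a decomposition into absolutely continuous and singular parts rather than assuming a density outright), but this does not affect the overall correctness of the route you outline.
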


\subsection{Notations}
Through out this paper, we will denote by $\pmh$  and $\pex$  the Markov transition kernel with respect to the original Metropolis--Hastings (MH) and the exchange algorithm respectively. As the exchange algorithm is defined in the setting of Bayesian inference, both chains are defined on the parameter space $\Theta$, which is considered to be $\bR^d$ or a subset of $\bR^d$ equipped with the Euclidean norm $\lVert\cdot \rVert$. We   denote by $q$ the proposal density of both $\pmh$ and $\pex$. For each $\theta$, both $\pex(\theta, \cdot)$ and $\pmh(\theta, \cdot)$ can be represented by a mixture of a continuous density and a point mass at $\theta$. We therefore denote by $\dpmh$ and $\dpex$ the continuous density part of $\pmh$ and $\pex$, respectively.  For each $\theta \in \Theta$, there is an associated probability measure with  density  (or probability mass function) $p_\theta(x) = \frac {f_\theta(x)}{Z(\theta)}$ defined on the sample space $\mathcal X$. The sample space $\mathcal X$ can  either be discrete or continuous.
\section{Theoretical results }\label{sec:theoretical exchange algorithm}
Now we are ready to discuss the theoretical properties of the exchange algorithm. We will first study its asymptotic variance, and then study its convergence rate. As the exchange algorithm $\pex$ is based on the original algorithm $\pmh$, many theoretical results here are comparison-type results. The rest of this section is organized as follows. Section \ref{subsec:asymptotic variance} discusses the asymptotic variance of the exchange algorithm. Section \ref{subsec: main results} -- \ref{subsec:Exchange chain convergence} discusses the convergence rate properties of the exchange algorithm. Section \ref{subsec:applications of general state convergence} connects our theoretical results with many practical models where the exchange algorithm is used.
\subsection{Asymptotic variance results and Peskun's ordering}\label{subsec:asymptotic variance}
We start by  proving the following simple but useful lemma, indicating that the exchange chain is always less statistically efficient comparing with the original MH chain:
\begin{lemma}[also proved in \cite{andrieu2018utility}, \cite{nicholls2012coupled}, \cite{andrieu2016establishing}]\label{lem:Peskun_order}
	
	For any $\theta, \theta' \in \Theta$, if $\theta \neq \theta'$, then the continuous density part of $\pmh$ and $\pex$ follows
	\[
	\dpex(\theta, \theta') \leq \dpmh(\theta, \theta').
	\]
\end{lemma}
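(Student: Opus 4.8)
The plan is to write down explicit formulas for the two continuous density parts and then compare their acceptance probabilities via a one-line convexity argument. Abbreviate the tractable part of the Hastings ratio by
\[
\rho(\theta,\theta') := \frac{\pi(\theta')\,q(\theta',\theta)\,f_{\theta'}(x)}{\pi(\theta)\,q(\theta,\theta')\,f_{\theta}(x)},
\]
so that, reading off Algorithm \ref{alg:MHMC}, the continuous density of the Metropolis--Hastings kernel at $\theta\neq\theta'$ is $\dpmh(\theta,\theta') = q(\theta,\theta')\,\min\!\big(1,\ \rho(\theta,\theta')\,Z(\theta)/Z(\theta')\big)$, while reading off Algorithm \ref{alg:exchange}, the auxiliary variable is $w\sim p_{\theta'}(\cdot)=f_{\theta'}(\cdot)/Z(\theta')$ and the move is accepted with probability $\min\!\big(1,\rho(\theta,\theta')\,f_\theta(w)/f_{\theta'}(w)\big)$, so integrating out $w$ gives
\[
\dpex(\theta,\theta') = q(\theta,\theta')\int_{\mathcal X} p_{\theta'}(w)\,\min\!\Big(1,\ \rho(\theta,\theta')\,\tfrac{f_\theta(w)}{f_{\theta'}(w)}\Big)\,dw .
\]

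The key observation is that the importance-sampling ratio $W := f_\theta(w)/f_{\theta'}(w)$, regarded as a random variable under $w\sim p_{\theta'}$, is unbiased for the missing constant: $\mathbb{E}[W] = \frac{1}{Z(\theta')}\int_{\mathcal X} f_\theta(w)\,dw = Z(\theta)/Z(\theta')$. Since $y\mapsto \min(1,\rho\,y)$ is concave on $[0,\infty)$ (a minimum of two affine functions), Jensen's inequality applied to this map and to $W$ yields, writing $\rho=\rho(\theta,\theta')$,
\begin{align*}
\int_{\mathcal X} p_{\theta'}(w)\,\min\!\Big(1,\ \rho\,\tfrac{f_\theta(w)}{f_{\theta'}(w)}\Big)\,dw
&= \mathbb{E}\big[\min(1,\rho W)\big] \\
&\le \min\!\big(1,\ \rho\,\mathbb{E}[W]\big) = \min\!\Big(1,\ \rho\,\tfrac{Z(\theta)}{Z(\theta')}\Big).
\end{align*}
Multiplying through by $q(\theta,\theta')\ge 0$ gives $\dpex(\theta,\theta')\le \dpmh(\theta,\theta')$. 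Equivalently, and without invoking Jensen, one may bound the integrand $\min\!\big(f_{\theta'}(w),\rho f_\theta(w)\big)$ above by each of its two arguments separately, integrate to get the two bounds $Z(\theta')$ and $\rho Z(\theta)$, and take their minimum; this is the more hands-on version of the same estimate.

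I do not anticipate a genuine obstacle: the content of the lemma is exactly the concavity-plus-unbiasedness step above, which is also the mechanism underlying the Peskun-type ordering for auxiliary-variable schemes. The only points needing a little care are bookkeeping: checking that the factor $\rho(\theta,\theta')$ is literally identical in the two acceptance ratios (it is, by inspection of the two algorithms); adopting the usual conventions when $f_\theta(x)$ or $f_{\theta'}(x)$ vanishes so that $\rho$ and the ratio $f_\theta(w)/f_{\theta'}(w)$ are well-defined; and noting that the appearance of the integral in $\dpex$ is just Tonelli applied to a nonnegative integrand, so no integrability subtleties arise.
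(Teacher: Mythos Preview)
Your proof is correct and follows essentially the same route as the paper: both arguments identify the concave map $y\mapsto\min(1,y)$ (or $y\mapsto\min(1,\rho y)$), apply Jensen's inequality to the randomized acceptance ratio under $w\sim p_{\theta'}$, and use the unbiasedness $\mathbb{E}_{p_{\theta'}}[f_\theta(w)/f_{\theta'}(w)]=Z(\theta)/Z(\theta')$ to recover the Metropolis--Hastings acceptance probability. Your write-up is in fact slightly more explicit than the paper's, keeping track of the $q(\theta,\theta')$ factor throughout and noting the equivalent ``bound by each argument and take the minimum'' alternative.
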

This lemma shows that, the exchange algorithm is uniformly less likely to make a move compared with the original MH algorithm.

\begin{proof}
	By Jensen's inequality (the function $\min\{1,x\}$ is concave).
	\[
	\dpex(\theta, \theta') = 
	\bE_{w\sim p_{\theta'}} \min\{1, a(\theta, \theta', w)\}\leq \min\{1, \bE_{w\sim p_\theta'}a(\theta, \theta', w) \},
	\]
	where 
	\[
	a(\theta,\theta',w) = \frac{\pi(\theta')q(\theta',\theta)f_{\theta'}(x)}{\pi(\theta)q(\theta,\theta')f_{\theta}(x)}\cdot \frac{f_\theta(w)}{f_{\theta'}(w)}
	\]
	is the randomized acceptance ratio defined in Algorithm \ref{alg:exchange}.
	Meanwhile,
	\begin{align*}
	\bE_{w\sim p_{\theta'}}a(\theta, \theta', w)  &= \int \frac{\pi(\theta')q(\theta',\theta)f_{\theta'}(x)}{\pi(\theta)q(\theta,\theta')f_{\theta}(x)}\cdot \frac{f_\theta(w)}{f_{\theta'}(w)} p_{\theta'}(w) dw \\
	& = \frac{\pi(\theta')q(\theta',\theta)p_{\theta'}(x)}{\pi(\theta)q(\theta,\theta')p_{\theta}(x)}\cdot \int \frac{f_\theta(w)}{f_{\theta'}(w)} \frac{Z(\theta')}{Z(\theta)} p_{\theta'}(w) dw \\ 
	& = \frac{\pi(\theta')q(\theta',\theta)p_{\theta'}(x)}{\pi(\theta)q(\theta,\theta')p_{\theta}(x)}\cdot \int p_\theta(w) dw \\
	& = \frac{\pi(\theta'|x)q(\theta',\theta)}{\pi(\theta|x)q(\theta,\theta')} = a(\theta,\theta').\\
	\end{align*}
	Therefore
	\[
	\dpex(\theta, \theta')\leq \min\{1, a(\theta,\theta') \} = \dpmh(\theta,\theta'),
	\]
	as desired.
\end{proof}
As the two chains have the same stationary distribution, this shows $\pmh \succ \pex$ in Peskun's ordering \cite{peskun1973optimum}. Then it follows directly from \cite{peskun1973optimum} and \cite{tierney1998note}:  
\begin{theorem}[[also proved in \cite{andrieu2018utility}, \cite{nicholls2012coupled}, \cite{andrieu2016establishing}]\label{thm:asymptotic variance}
	Let $L_0^2(\pi)$ be the set of all the $L^2(\pi)$-integrable random variable with mean $0$. Define 
	
	\[\sigma^2(P,f) = \lim_{n\rightarrow \infty} \frac 1n \var_P{\sum_{i=1}^nf(X_i)}, \]
	where $X_0, X_1, \dotsc, X_n$ is a Markov chain with initial distribution $\pi$ and transition kernel $P$. Then 
	$$\sigma^2(\pmh,f) \leq \sigma^2(\pex,f)$$ 
	for all $f\in L_0^2(\pi)$.
\end{theorem}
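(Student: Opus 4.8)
The plan is to deduce the statement from the general–state–space version of Peskun's theorem, using Lemma \ref{lem:Peskun_order} as the only non‑standard input. First I would record the two structural facts that make this possible. (i) Both $\pmh$ and $\pex$ are reversible with respect to the posterior $\pi(\cdot|x)$: for $\pmh$ this is the usual detailed–balance property of Metropolis--Hastings, and for $\pex$ it was verified in \cite{murray2012mcmc}. (ii) By Lemma \ref{lem:Peskun_order}, the off‑diagonal (continuous) densities satisfy $\dpex(\theta,\theta')\le\dpmh(\theta,\theta')$ for every $\theta\neq\theta'$, and since the point masses at $\theta$ carry no information for what follows, this is precisely the statement $\pmh(\theta,A)\ge\pex(\theta,A)$ for every measurable $A$ with $\theta\notin A$, i.e.\ $\pmh$ dominates $\pex$ in the Peskun--Tierney partial order.

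The second step is to pass from the ordering of kernels to the ordering of Dirichlet forms. For a $\pi$‑reversible kernel $P$, write $\mathcal E_P(f)=\iprod{(I-P)f}{f}_\pi=\tfrac12\iint (f(\theta')-f(\theta))^2\,\pi(d\theta)\,P(\theta,d\theta')$. Because the integrand vanishes on the diagonal $\theta'=\theta$, only the continuous parts $\dpmh,\dpex$ contribute, so Lemma \ref{lem:Peskun_order} immediately gives $\mathcal E_{\pmh}(f)\ge\mathcal E_{\pex}(f)$ for every $f\in L^2_0(\pi)$; equivalently $I-\pmh\succeq I-\pex$ as nonnegative self‑adjoint operators on $L^2_0(\pi)$.

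The third step is to use the spectral representation of the asymptotic variance. For $\pi$‑reversible $P$ and $f\in L^2_0(\pi)$, starting from $\sigma^2(P,f)=\|f\|_\pi^2+2\sum_{k\ge1}\iprod{f}{P^kf}_\pi$ and the spectral theorem one gets $\sigma^2(P,f)=\int_{[-1,1)}\frac{1+\lambda}{1-\lambda}\,\mu_{P,f}(d\lambda)=2\iprod{(I-P)^{-1}f}{f}_\pi-\|f\|_\pi^2$, where $\mu_{P,f}$ is the spectral measure of $P$ at $f$ and the expression is read in $[0,\infty]$. Since $t\mapsto 1/t$ is operator‑antitone on $(0,\infty)$, the operator inequality $I-\pmh\succeq I-\pex$ yields $\iprod{(I-\pmh)^{-1}f}{f}_\pi\le\iprod{(I-\pex)^{-1}f}{f}_\pi$, and hence $\sigma^2(\pmh,f)\le\sigma^2(\pex,f)$. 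Equivalently, once the Peskun ordering of step (ii) is in hand, the conclusion is exactly \cite[Thm.~4]{tierney1998note} (the general–state–space extension of \cite{peskun1973optimum}), and one may simply cite it.

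I expect the main obstacle to be the functional‑analytic bookkeeping on a possibly non‑compact, continuous state space: justifying the spectral formula for $\sigma^2(P,f)$ (equivalently the lag‑window limit exists in $[0,\infty]$), making sense of $(I-P)^{-1}$ when $P$ has no spectral gap (so that $\sigma^2$ may be $+\infty$ and the na\"ive operator inverse is unbounded), and checking that the operator‑antitonicity step survives these unbounded cases. The standard remedy is to apply the bounded‑operator comparison $A\succeq B\ge\varepsilon I>0\Rightarrow A^{-1}\preceq B^{-1}$ to the damped kernels $(1-\varepsilon)\pmh$ and $(1-\varepsilon)\pex$ and let $\varepsilon\downarrow 0$, using monotone convergence on $\int\frac{1}{1-(1-\varepsilon)\lambda}\,\mu_{\cdot,f}(d\lambda)$; alternatively one truncates the spectral integral near $\lambda=1$. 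Since all of this is carried out in \cite{tierney1998note} and \cite{roberts2008variance}, in the write‑up I would establish the Peskun ordering explicitly via Lemma \ref{lem:Peskun_order} and then invoke those results, rather than reproving the spectral comparison from scratch.
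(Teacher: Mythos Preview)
Your proposal is correct and follows essentially the same route as the paper: establish $\pmh\succ\pex$ in the Peskun--Tierney ordering via Lemma~\ref{lem:Peskun_order}, then invoke \cite{peskun1973optimum} and \cite[Thm.~4]{tierney1998note} to conclude $\sigma^2(\pmh,f)\le\sigma^2(\pex,f)$. The paper simply cites these results without the spectral and Dirichlet-form elaboration you provide, so your version is a more detailed rendering of the same argument.
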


The quantity $\sigma^2(P,f) $ is often referred to as the `asymptotic variance'. Theorem \ref{thm:asymptotic variance} proves the original  $\pmh$ chain has smaller asymptotic variance and is thus statistically more efficient than $\pex$ chain.  

\begin{remark}
	The asymptotic variance defined in Theorem \ref{thm:asymptotic variance} may be infinite. It is worth mentioning that a Markov chain is called `variance bounding' if its asymptotic variance is finite for all $f\in L_0^2(\pi)$. The relationship between variance bounding and geometric ergodicity is discussed thoroughly in \cite{roberts2008variance}.
\end{remark}

Theorem \ref{thm:asymptotic variance} is not very surprising because in each iteration of the exchange algorithm, the ratio $\frac{f_\theta(w)}{f_{\theta'}(w)}$ can be viewed as an estimator  for the unknown quantity $\frac{Z(\theta)}{Z(\theta')}$. On the other hand, the standard MH chain uses $\frac{Z(\theta)}{Z(\theta')}$ directly which can be viewed as an estimator with variance $0$. Therefore it is not surprising that the original MH chain has a smaller asymptotic variance.

However, asymptotic variance is only one measurement to evaluate a Markov chain. Another natural way of evaluating a Markov chain is the speed of convergence to stationary distribution. Even though $\pex$ is dominated by $\pmh$ in Peskun's order, the following simple example shows it is possible that $\pex$ converges to stationary distribution uniformly faster than $\pmh$.
\begin{example}[Two point example]\label{eg:two point bayes}
	Let $X\sim \Bern(\theta)$, where the parameter space $\Theta$ only contains two points: $\Theta = \bigg\{\theta_1 = \frac 14, \theta_2 = \frac 34\bigg\}$. 
	Suppose the observed data is only one single point $x = 1$. Suppose the prior measure on $\Theta$ is defined by $\pi(\theta_1) = \frac 34, \pi(\theta_2) = \frac 14$. It is not hard to compute the posterior measure:
	$
	\pi(\theta_1|x) = \pi(\theta_2|x) = \frac 12,
	$
	which is a uniform measure on $\Theta$.
	We further assume the transition matrix equals
	$
	\begin{pmatrix} 
	0~~~&1\\
	1~~~&0\\
	\end{pmatrix}.
	$
	
	It is clear that all the moves of the Metropolis--Hastings chain will be accepted, hence $\pmh$ has transition matrix: $
	\begin{pmatrix} 
	0~~~&1\\
	1~~~&0\\
	\end{pmatrix}.
	$
	
	On the other hand, the transition function for $\pex$ chain can be computed by:
	\[
	\pex(\theta_1,\theta_2) =   \Prob_{\theta_2}(w = 0) + \frac 13  \Prob_{\theta_2}(w = 1) = 0.5.
	\]
	Similarly $\pex(\theta_2,\theta_1) = \frac 12,$ so the transition matrix would be 
	$
	\begin{pmatrix} 
	0.5~~~&0.5\\
	0.5~~~&0.5\\
	\end{pmatrix}.
	$
	Therefore, with any initialization, $\pex$ converges to the stationary distribution after one step. However, $\pmh$ never converges as it jumps back and forth between $\theta_1$ and $\theta_2$.
\end{example}

\begin{remark}
	In Example \ref{eg:two point bayes} above, $\pmh$ never converges because it is a periodic chain, i.e., its smallest eigenvalue equals $-1$. But even if we assume  $\pmh$ is aperiodic, it is still possible  that $\pmh$ coverges slower than $\pex$, as we could tilt  the transition matrix above a little bit, for example, let the transision matrix for $\pmh$ be
	$
	\begin{pmatrix} 
	\epsilon& 1-\epsilon\\
	1-\epsilon&\epsilon\\
	\end{pmatrix},
	$
	then $\pex$ chain still converges faster.
\end{remark}

Combining Theorem \ref{thm:asymptotic variance} and Example \ref{eg:two point bayes}, we can conclude that
\begin{enumerate}
	\item In terms of asymptotic variance, $\pmh$ chain is at least as good as  $\pex$,
	\item In terms of distributional convergence, it is not possible to derive a general ordering between $\pmh$ and $\pex$ chain.
\end{enumerate}

The above results tell us the exchange chain might converge faster or more slowly than the original MH chain. In the rest part of this section, we will further investigate the convergence speed of $\pex$.
\subsection{Convergence rate: summary of main results}\label{subsec: main results}
In this part, we  study the convergence properties of the exchange algorithm.  The exchange algorithm can be viewed as a variant of the idealised but impractical  Metropolis--Hastings algorithm. Therefore, suppose one knows the convergence speed of one algorithm, it is natural  to ask if the other algorithm also has a similar convergence speed. For example, one can ask questions like:
\begin{itemize}
	\item \textbf{Question 1:} Suppose $\pex$ is geometrically ergodic, is the original chain  also geometrically ergodic?
\end{itemize}

Or the reverse
\begin{itemize}
	\item \textbf{Question 2:} Suppose $\pmh$ is geometrically ergodic, is the exchange chain also geometrically ergodic? If not, can we find sufficient conditions to ensure $\pex$ `inherits' the geometric ergodicity of the original chain?
\end{itemize}

We will answer both of the two questions in the rest of this section. The second question is probably more interesting in a practitioner's point of view. In real settings, usually we can study the convergence rate of $\pmh$, though it is not practically implementable. Therefore theoretical guarantees of $\pex$ would justify the usefulness of the exchange algorithm.

Before everything is rigorously stated, we state our results in a heuristic way here. All the results mentioned below are formally stated and proved in Section \ref{subsec:Exchange chain convergence}.
\begin{itemize}
	\item (Question 1) If $\pex$ is geometrically ergodic, there is no guarantee that $\pmh$ is also geometrically ergodic. In fact, Example \ref{eg:two point bayes} gives  such a counterexample.
	\item (Question 1) If $\pex$ is geometrically ergodic, then any `lazy' version (defined later) of $\pmh$ must also be geometrically ergodic (Theorem \ref{thm:lazy MH chain}). If the original chain $\pmh$ satisfies some further conditions (see Corollary \ref{cor:pmh with lower bounded rejection} for details), then the geometric ergodicity of $\pex$ implies the geometrically ergodicity of $\pmh$. 
	\item (Question 2) If $\pmh$  is geometrically ergodic, we have an  example to show the exchange algorithm is not necessarily geometrically ergodic. 
	\item (Question 2) If  $\pmh$  is geometrically ergodic, we have established sufficient conditions to ensure the geometric ergodicity of the exchange algorithm (Theorem \ref{thm:geometric ergodicity inherit, uniform case}, \ref{thm:Geometric ergodic inhirit}). Theorem \ref{thm:geometric ergodicity inherit, uniform case} gives a general condition for `geometric ergodicity inheritance' without further assumption on the structure of transition kernels.  Theorem \ref{thm:Geometric ergodic inhirit} gives a much weaker condition but with the further assumption that the proposal kernel is symmetric.
\end{itemize}

\subsection{Exchange chain convergence}\label{subsec:Exchange chain convergence}

Now we are ready to study the convergence properties for $\pex$. 
Though Example \ref{eg:two point bayes} gives us a negative example, indicating the exchange chain may converge  faster than the original chain. The next theorem shows, after making the original chain `lazy', the original chain is no worse than the exchange chain, which answers Question 1 completely. 
\begin{theorem}\label{thm:lazy MH chain}
	
	Suppose the exchange chain $\pex$ is uniformly/geometrically ergodic, then for any $ 0 < \lambda < 1$, the chain $\pmh(\lambda)$ is also uniformly/geometrically ergodic. Here $\pmh(\lambda)$  is the lazy version of $\pmh$, defined by 
	\[
	\pmh(\lambda) := \lambda \pmh + (1-\lambda) I.
	\]
\end{theorem}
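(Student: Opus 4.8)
The plan is to argue the uniform-ergodicity and geometric-ergodicity cases separately, in each case combining the Peskun domination $\dpex(\theta,\theta')\le\dpmh(\theta,\theta')$ of Lemma~\ref{lem:Peskun_order} with the relevant characterization (Theorem~\ref{thm:uniform ergodicity} or Theorem~\ref{thm:geometrically ergodicity}). The guiding principle is that Peskun domination already controls the ``$+1$ end'' of $\pmh$ (a right spectral gap, or enough mass leaving each state), while mixing with the identity is exactly what supplies the control at the ``$-1$ end''; both are needed, and this is where the hypothesis $\lambda<1$ enters. Throughout, note that $\pmh(\lambda)=\lambda\pmh+(1-\lambda)I$ is reversible with respect to $\pi$ (a mixture of reversible kernels) and is $\phi$-irreducible and aperiodic, inheriting these from $\pmh$ (indeed $\pmh(\lambda)(\theta,\{\theta\})\ge 1-\lambda>0$), so the hypotheses of those theorems are met.

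For the geometric case I would work on $L_0^2(\pi)$. Geometric ergodicity of $\pex$ gives, via Theorem~\ref{thm:geometrically ergodicity}, a constant $r<1$ with $\sigma(\pex)\subset[-r,r]$, hence $\langle f,\pex f\rangle_\pi\le r\|f\|_\pi^2$ for all $f\in L_0^2(\pi)$. Since the Dirichlet form $\langle f,(I-P)f\rangle_\pi=\tfrac12\int\!\!\int (f(\theta')-f(\theta))^2 P(\theta,d\theta')\,\pi(d\theta)$ of a reversible kernel $P$ depends only on its continuous-density part, Lemma~\ref{lem:Peskun_order} yields $\langle f,(I-\pex)f\rangle_\pi\le\langle f,(I-\pmh)f\rangle_\pi$, i.e.\ $\langle f,\pmh f\rangle_\pi\le\langle f,\pex f\rangle_\pi\le r\|f\|_\pi^2$ (this is the operator form of the Peskun ordering already invoked for Theorem~\ref{thm:asymptotic variance}). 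Combining with the trivial bound $\langle f,\pmh f\rangle_\pi\ge -\|f\|_\pi^2$ (as $\|\pmh\|_{L^2(\pi)}\le 1$), every unit $f\in L_0^2(\pi)$ satisfies
\[
1-2\lambda\;\le\;\langle f,\pmh(\lambda)f\rangle_\pi\;=\;\lambda\langle f,\pmh f\rangle_\pi+(1-\lambda)\;\le\;1-\lambda(1-r).
\]
Because $0<\lambda<1$ and $r<1$, both bounds lie strictly inside $(-1,1)$, so $\sigma(\pmh(\lambda))\subset[-r',r']$ with $r':=\max\{|1-2\lambda|,\,1-\lambda(1-r)\}<1$, and Theorem~\ref{thm:geometrically ergodicity} gives geometric ergodicity of $\pmh(\lambda)$.

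For the uniform case I would instead transfer a whole-space minorization. Writing each kernel as its continuous density plus an atom at $\theta$, one checks from $\dpmh\ge\dpex$ and $\lambda(1-\lambda)\le\min\{\lambda,\,1-\lambda\}$ that $\pmh(\lambda)(\theta,\cdot)\ge\beta\,\pex(\theta,\cdot)$ for every $\theta$, where $\beta:=\lambda(1-\lambda)>0$: the off-diagonal densities obey $\lambda\dpmh\ge\lambda\dpex\ge\beta\dpex$, while the holding mass of $\pmh(\lambda)$ at $\theta$ is at least $1-\lambda\ge\beta$, hence at least $\beta$ times that of $\pex$. Thus $\pmh(\lambda)=\beta\pex+(1-\beta)Q$ for some Markov kernel $Q$, and expanding the $n_0$-fold composition (all $2^{n_0}$ resulting terms are nonnegative kernels, one of them being $\beta^{n_0}\pex^{n_0}$) gives $\pmh(\lambda)^{n_0}\ge\beta^{n_0}\pex^{n_0}$ pointwise on sets. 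If $\pex$ is uniformly ergodic, Theorem~\ref{thm:uniform ergodicity} supplies an integer $n_0$, a constant $\delta>0$ and a probability measure $\nu$ with $\pex^{n_0}(\theta,\cdot)\ge\delta\nu(\cdot)$ for all $\theta$, whence $\pmh(\lambda)^{n_0}(\theta,\cdot)\ge\beta^{n_0}\delta\,\nu(\cdot)$ for all $\theta$; a second appeal to Theorem~\ref{thm:uniform ergodicity} yields uniform ergodicity of $\pmh(\lambda)$.

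The only real subtlety is that geometric ergodicity of $\pex$ provides a spectral gap but not a whole-space minorization, so the two cases genuinely require different tools; and in the geometric case one must not forget to bound $\sigma(\pmh(\lambda))$ away from $-1$ as well as from $+1$, which is precisely the role of $\lambda<1$ (the statement fails for $\lambda=1$, as Example~\ref{eg:two point bayes} shows). The one routine point to verify carefully is the Dirichlet-form comparison underlying $\langle f,\pmh f\rangle_\pi\le\langle f,\pex f\rangle_\pi$.
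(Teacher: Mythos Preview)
Your proof is correct and follows essentially the same approach as the paper: for geometric ergodicity you use the spectral characterization together with the Peskun ordering to bound $\sup\sigma(\pmh)$ and laziness to bound $\inf\sigma(\pmh(\lambda))$, and for uniform ergodicity you transfer the whole-space minorization via a one-step domination $\pmh(\lambda)\ge\beta\,\pex$. The only cosmetic differences are that the paper cites Tierney's Lemma~3 for the spectral comparison rather than writing out the Dirichlet-form argument, and it uses the slightly sharper constant $\lambda_0=\min\{\lambda,1-\lambda\}$ in place of your $\beta=\lambda(1-\lambda)$ in the uniform case.
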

\begin{proof}
	First suppose $\pex$ is uniformly ergodic, then Theorem \ref{thm:uniform ergodicity} shows there exists here exists an integer $n_0 > 0$, $\delta > 0$, and a probability measure $\nu$ such that, for any $x\in \mathbb R^p$,
	\[\pex^{n_0}(x,\cdot) \geq \delta \nu(\cdot)\].
	On the other hand, for any measurable set $B$ and any point $x$, if $x\in B$, then we have 
	\begin{align*}
	\pmh(\lambda)(x,B) &= \pmh(\lambda)(x,\{x\}) +  \pmh(\lambda)(x,B/\{x\}) \\
	&\geq (1-\lambda) + \lambda \pex(x,B/\{x\}) \\
	&\ \geq \lambda_0 \pex(x,B), 
	\end{align*}
	where $\lambda_0 = \min\{\lambda, 1-\lambda\}$. The same result holds if $x\notin B$. Therefore, $$\pmh^{n_0}(\lambda)(x,\cdot) \geq \lambda_0^{n_0} \pex(x,\cdot) \geq  \lambda_0^{n_0}\delta \nu(\cdot) $$ for all $x$. Thus $\pmh(\lambda)$ is uniformly ergodic by Theorem \ref{thm:uniform ergodicity}.
	
	Now suppose $\pex$ is geometrically ergodic. Define $\sigma(\pex)$, the spectrum of $\pex$, be the set of real numbers $\lambda$ such that  the operator $\pex - \lambda I$ is not invertible on $L_0^2(\pi)$, and define $\sigma(\pmh), \sigma(\pmh(\lambda))$ accordingly. 	By Theorem \ref{thm:geometrically ergodicity}, we have $\sigma(\pex)\subset [-r, r]$ for some $r < 1$. 
	
	Let $m(\lambda):= \inf\sigma(\pmh(\lambda))$ and $M(\lambda):= \sup\sigma(\pmh(\lambda))$. In particular let $m = m(1)$ and $M = M(1)$. Then we have
	\[
	m(\lambda) = \lambda m + (1-\lambda) \geq 1- 2\lambda > -1.
	\]
	
	Meanwhile, as the exchange chain is dominated by the original MH chain in Peskun's ordering, it is shown by Tierney \cite{tierney1998note} Lemma 3 that the supremum of $\pex$'s spectrum should be no less than the supremum of $\pmh$'s spectrum. That is, $M < \sup\sigma(\pex) \leq r$. 
	
	Therefore $$M(\lambda)  = \lambda M + (1-\lambda) \leq  \lambda r + (1-\lambda) < 1.$$ Since $-1 < m(\lambda) \leq M(\lambda) < 1,$ there exists $r(\lambda)$ such that $$\sigma(\pmh(\lambda))\subset [-r(\lambda), r(\lambda)].$$ The lazy version of the MH algorithm $\pmh(\lambda)$ is thus geometrically ergodic, as desired. 
\end{proof}

In fact, the `laziness modification' is often not necessary when the original chain $\pmh$ has spectrum strictly bounded above from $-1$. The next result follows immediately from Theorem \ref{thm:lazy MH chain}.
\begin{corollary}\label{cor:pmh with larger than -1  spectrum }
	Suppose the exchange chain $\pex$ is geometrically ergodic, and the spectrum of the MH chain  $\pmh$ belongs to the interval $[-r, 1]$ for some $r<1$, then $\pmh$ is also geometrically ergodic.
\end{corollary}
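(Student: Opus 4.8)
The plan is to read off the spectral consequences of Theorem~\ref{thm:lazy MH chain} and then close the gap with the hypothesis. Since $\pex$ is geometrically ergodic, Theorem~\ref{thm:lazy MH chain} gives that the lazy chain $\pmh(\lambda) = \lambda\pmh + (1-\lambda)I$ is geometrically ergodic for every $0<\lambda<1$; fix any such $\lambda$, say $\lambda=\tfrac12$. By the spectral characterization of geometric ergodicity in Theorem~\ref{thm:geometrically ergodicity}(3), there is an $r(\lambda)<1$ with $\sigma(\pmh(\lambda))\subset[-r(\lambda),r(\lambda)]$.

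Next I would transfer this to $\sigma(\pmh)$. For any real $\mu$, the operator $\pmh(\lambda)-\mu I = \lambda\bigl(\pmh - \tfrac{\mu-(1-\lambda)}{\lambda}I\bigr)$ is invertible on $L_0^2(\pi)$ if and only if $\pmh - \tfrac{\mu-(1-\lambda)}{\lambda}I$ is, so $\sigma(\pmh(\lambda)) = \lambda\,\sigma(\pmh)+(1-\lambda)$ (equivalently, apply the spectral mapping theorem to the degree-one polynomial $z\mapsto\lambda z+(1-\lambda)$ on the bounded self-adjoint operator $\pmh$). Writing $M:=\sup\sigma(\pmh)$, this yields $\lambda M + (1-\lambda)=\sup\sigma(\pmh(\lambda))\le r(\lambda)<1$, and since $\lambda>0$ we conclude $M<1$. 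Alternatively, one can bypass Theorem~\ref{thm:lazy MH chain} entirely: because $\pmh\succ\pex$ in Peskun's order, Lemma~3 of Tierney~\cite{tierney1998note} gives $M\le\sup\sigma(\pex)$, which is strictly less than $1$ by geometric ergodicity of $\pex$ — this is precisely the inequality already invoked in the proof of Theorem~\ref{thm:lazy MH chain}.

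It then remains to invoke the hypothesis. By assumption $\sigma(\pmh)\subset[-r,1]$ with $r<1$, so $m:=\inf\sigma(\pmh)\ge -r>-1$. Combining with $M<1$, we get $\sigma(\pmh)\subset[-\tilde r,\tilde r]$ with $\tilde r:=\max\{r,M\}<1$, and Theorem~\ref{thm:geometrically ergodicity}(3) gives that $\pmh$ is geometrically ergodic. There is no real obstacle here — the statement is genuinely a one-line corollary once the spectral picture from Theorem~\ref{thm:lazy MH chain} is in hand. The only point that deserves a remark is why the hypothesis $r<1$ cannot be dropped in favour of the trivial bound $\sigma(\pmh)\subset[-1,1]$: the laziness/Peskun argument controls only the \emph{top} of $\sigma(\pmh)$; it lets the bottom $m$ tend to $-1$ as $\lambda\uparrow1$ but never certifies $m>-1$, so the hypothesis is exactly what supplies that missing bound $m\ge -r>-1$.
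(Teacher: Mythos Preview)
Your proposal is correct, and your ``alternative'' route---bounding $M=\sup\sigma(\pmh)\le\sup\sigma(\pex)<1$ directly from Peskun's ordering via Tierney's Lemma~3, then combining with the hypothesis $\inf\sigma(\pmh)\ge -r>-1$---is exactly the paper's proof. The paper does not go through Theorem~\ref{thm:lazy MH chain} and the spectral-mapping computation at all; that detour is valid but unnecessary, since the only content of Theorem~\ref{thm:lazy MH chain} actually needed here is precisely the Peskun/Tierney bound on $M$ that you extract in your second paragraph.
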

\begin{proof}
	It suffices to show the supremum of $\sigma(\pmh)$ is strictly less than $1$, which follows immediately from our assumption that the supremum of $\sigma(\pex)$ is strictly less than $1$ and the Peskun's ordering result proved in Lemma \ref{lem:Peskun_order}.
\end{proof}
There are many Metropolis--Hastings algorithms that are known to have spectrum in the set $[-r,1]$, or even $[0,1]$. The following corollary provides  a few examples,  see also Rudolf and Ullrich \cite{rudolf2013positivity} for further discussions.
\begin{corollary}\label{cor:pmh with lower bounded rejection}
	If the original Metropolis--Hastings algorithm $\pmh$ satisfies any of the following conditions, then its spectrum   belongs to the interval $[-r, 1]$ for some $r<1$. In particular, if $\pmh$ satisfies any of the Contiditons 2--5. Then the $\pmh$ is a positive operator, i.e., its spectrum is a subset of the interval $[0,1]$.  Therefore,  it follows from Theorem \ref{thm:lazy MH chain} that the geometric ergodicity of $\pex$ implies the geometricity ergodicity of $\pmh$.
	\begin{enumerate}
		\item 	If the $\pmh$ chain has rejection probability (i.e. $\pmh(\theta, \{\theta\})$) uniformly bounded below from $0$, in other words, there exists $c > 0$ such that:
		\[
		\pmh(\theta, \{\theta\}) > c
		\]
		uniformly over $\theta \in \Theta$
		\item If the $\pmh$ algorithm is an independent Metropolis--Hastings algorithm (IMH)
		\item If the $\pmh$ algorithm is a random-scan Gibbs sampler
		\item If the $\pmh$ algorithm is a hit-and-run  sampler
		\item If the $\pmh$ algorithm has proposal distribution of the following form:
		\[
		q(\theta, \theta') = \int q(\theta, \eta) q(\theta', \eta) d\eta .
		\]
	\end{enumerate}
	
\end{corollary}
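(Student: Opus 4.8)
The plan is to reduce the statement to establishing, under each of Conditions 1--5, a lower bound $\inf\sigma(\pmh)\ge -r$ on $L_0^2(\pi)$ for some $r<1$ --- equivalently $\sigma(\pmh)\subset[-r,1]$, since $\sup\sigma(\pmh)\le 1$ automatically. Granting this, Corollary \ref{cor:pmh with larger than -1  spectrum } applies directly: it combines the spectral bound with Theorem \ref{thm:lazy MH chain} and the Peskun ordering of Lemma \ref{lem:Peskun_order} (which forces $\sup\sigma(\pmh)<1$ because $\pex$ is geometrically ergodic), yielding at once the claimed location of $\sigma(\pmh)$, its positivity under Conditions 2--5 (where one gets the stronger $\sigma(\pmh)\subset[0,1]$), and the implication ``$\pex$ geometrically ergodic $\Rightarrow$ $\pmh$ geometrically ergodic.'' So the real work is the spectral bound, handled condition by condition.

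For Condition 1 I would peel off a multiple of the identity: write $\pmh=cI+(\pmh-cI)$. For each $\theta$ the kernel $\pmh-cI$ assigns mass $\pmh(\theta,\{\theta\})-c\ge 0$ to $\{\theta\}$ (by hypothesis) and has nonnegative continuous part $\dpmh(\theta,\cdot)$, so it is a nonnegative kernel of constant total mass $1-c$, and it is $\pi$-reversible because $\pmh$ and $I$ are. The routine Cauchy--Schwarz estimate $|(\pmh-cI)f(\theta)|^2\le(1-c)\int(\pmh-cI)(\theta,d\theta')\,f(\theta')^2$, integrated against $\pi$ and using reversibility, gives $\|\pmh-cI\|_{L_0^2(\pi)}\le 1-c$; hence $\sigma(\pmh)\subset[2c-1,1]\subset[-(1-c),1]$, so $r=1-c<1$ works (and $\pmh$ is already positive once $c\ge\tfrac12$).

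For Conditions 2--5 the plan is to invoke the known positivity of these particular Metropolis--Hastings chains, i.e.\ that $\pmh$ is a positive self-adjoint operator on $L_0^2(\pi)$, so that $\sigma(\pmh)\subset[0,1]$ and $r=\tfrac12$ serves. Positivity of the Metropolized independence sampler (Condition 2) is classical. For the random-scan Gibbs sampler and the hit-and-run sampler (Conditions 3 and 4) one writes $\pmh$ as an average --- over coordinates, respectively over directions --- of conditional-expectation operators, each an orthogonal projection and hence positive, and averages of positive operators are positive. For a proposal of convolution form $q(\theta,\theta')=\int q(\theta,\eta)\,q(\theta',\eta)\,d\eta$ (Condition 5) the proposal operator has the shape $QQ^{*}$ and positivity transfers to the accept--reject chain. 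All of this is contained in, or adapted from, Rudolf and Ullrich \cite{rudolf2013positivity}, which I would cite for Conditions 2--5.

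With $\sigma(\pmh)\subset[-r,1]$, $r<1$, in hand in each case (and $\subset[0,1]$ under Conditions 2--5), Corollary \ref{cor:pmh with larger than -1  spectrum } closes the argument. The step I expect to require the most care is the positivity under Conditions 2--5: positivity of the \emph{proposal} operator does not by itself force positivity of the full Metropolis--Hastings operator, so one has to lean on the structural decompositions of \cite{rudolf2013positivity} --- writing the accept/reject correction as an average of projections, or exploiting the $QQ^{*}$ structure together with detailed balance --- to push positivity through to $\pmh$ itself. Condition 1 and the final appeal to the earlier corollary are routine.
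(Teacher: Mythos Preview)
Your proposal is correct and follows essentially the same approach as the paper: peel off $cI$ for Condition~1 to get $\inf\sigma(\pmh)\ge 2c-1>-1$, invoke positivity from the literature for Conditions~2--5, then apply Corollary~\ref{cor:pmh with larger than -1  spectrum }. The only minor discrepancy is that the paper cites Andrieu and Vihola \cite{andrieu2015convergence}, Proposition~16, for Condition~5 rather than Rudolf and Ullrich \cite{rudolf2013positivity}; you should double-check that \cite{rudolf2013positivity} actually handles the $QQ^*$-proposal case before relying on it alone.
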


\begin{proof}
	The proof when $\pmh$ satisfies any of Conditions 2--4 can be found in Rudolf and Ullrich \cite{rudolf2013positivity}. The proof when $\pmh$ satisfies Condition 5 can be found in Andrieu and Vihola \cite{andrieu2015convergence}, Proposition 16. Therefore we only prove the case when $\pmh$ satisfies condition 1. 
	
	Since we have
	\[
	\pmh(\theta, \{\theta\}) > c
	\]
	uniformly over $\theta \in \Theta$. It allows us to write $\pmh$ as a convex combination of two Markov operators, i.e., 
	\[
	\pmh = c I + (1-c)\tilde P_{\text{MH}},
	\]
	where $\tilde P_{\text{MH}} := \frac{\pmh - cI }{1-c}$ is a well-defined Markov transition kernel. Since the spectrum of the identity operator is precisely $\{1\}$, we have:
	\[
	\inf \sigma(\pmh) = c + (1-c) \inf\sigma(\tilde P_{\text{MH}})	\geq c + (1-c)\times(-1) = 2c -1 > -1,\]
	as desired.
\end{proof}

Theorem \ref{thm:lazy MH chain},  Corollary \ref{cor:pmh with larger than -1  spectrum }, and Corollary \ref{cor:pmh with lower bounded rejection} essentially show  the (at most slightly modified) original chain is geometrically ergodic when the exchange algorithm is geometrically ergodic.  In other words, the geometric ergodicity of $\pmh$ is essentially the \textbf{necessary condition} for  the geometric ergodicity of $\pex$. However, the next example shows this condition is not sufficient. 

\begin{example}[Exponential likelihood with Gamma prior]\label{eg: exponential gamma}
	
	Suppose the likelihood is $\mathrm{Exp}(\theta)$, that is, $$p_\theta(x) = \theta e^{-\theta x},$$ prior distribution is $\mathrm{Exp}(1)$. Under this setting, it is easy to compute the posterior distribution:
	\[
	\pi(\theta|x) \propto \theta e^{-\theta(x+1)}
	\]
	which is a $\text{Gamma}(2,x+1)$ distribution. Consider an independence Metropolis--Hastings sampler with an $\text{Gamma}(2,x+1)$ proposal, that is,
	\[q(\theta, \theta') = \pi(\theta'|x).\]
	As the proposal is precisely the posterior, this independence Metropolis--Hastings chain will converge perfectly after one step, and therefore is obviously geometrically ergodic. On the other hand, the continuous density part  of $\pex$ when $\theta\neq \theta'$ is:
	\begin{align*}
	\dpex(\theta,\theta')  &= q(\theta,\theta') \bE_{p_{\theta'}}{\min\{\frac{\pi(\theta'|x)q(\theta',\theta)}{\pi(\theta|x)q(\theta,\theta')}\cdot \frac{p_\theta(w)}{p_{\theta'}(w)},1\}} \\
	& = \pi(\theta'|x)\bE_{p_\theta'}{\min\{ \frac{p_\theta(w)}{p_{\theta'}(w)},1\}} \\
	& = \pi(\theta'|x) \int \min\{\theta e^{-\theta w}, \theta'e^{-\theta' w}\} dw \\
	& = \pi(\theta'|x) ( 1 - e^{-w_0\min\{\theta, \theta'\}} + e^{-w_0(\theta,\theta')\max\{\theta, \theta'\}}),
	\end{align*}
	where $$w_0(\theta,\theta') = \frac{\log(\theta) - \log(\theta')}{\theta - \theta'}$$ is the only solution for equation:
	\[
	\theta e^{-\theta w} = \theta'e^{-\theta' w}
	\]
	for any fixed $\theta\neq \theta'$.
	
	We can also compute the `rejection probability' at point $\theta$: 
	\begin{align*}
	\pex(\theta,\{\theta\})  &= 1 - \int \dpex(\theta,\theta') d\theta'\\
	& = \int \pi(\theta'|x) (e^{-w_0(\theta,\theta')\min\{\theta, \theta'\}} - e^{-w_0(\theta,\theta')\max\{\theta, \theta'\}}) d\theta'\\
	& = \int_0^\theta  \pi(\theta'|x) (e^{-w_0(\theta,\theta') \theta'} - e^{-w_0(\theta,\theta')\theta}) d\theta' + \int_{\theta}^\infty  \pi(\theta'|x)  (e^{-w_0(\theta,\theta')\theta} - e^{-w_0(\theta,\theta')\theta'}) d\theta'\\
	\end{align*}
	
	Notice that for each fixed $\theta'$, when $\theta$ goes to infinity, we have $-w_0(\theta,\theta')\theta\rightarrow -\infty$ and $-w_0(\theta,\theta')\theta' \rightarrow 0.$  Therefore the first term of the integration goes to $$\int_0^\infty  \pi(\theta'|x) d\theta' = 1 $$ as $\theta \rightarrow \infty$ by Lebesgue's dominated convergence theorem (with control function $\pi(\theta'|x)$). The second term goes to $0$ as  $\theta \rightarrow \infty$. 
	
	Therefore, we have:
	\[
	\mathrm{esssup}_\theta \pex(\theta,\{\theta\}) = 1.
	\]
	It is proved by Roberts and Tweedie \cite{roberts1996geometric} (Thm 5.1) that if a MH chain is geometrically ergodic, then its rejection probability is necessarily bounded away from unity. Therefore, the exchange algorithm in this example is not geometric ergodic. 
\end{example}

Theorem \ref{thm:lazy MH chain} provides a necessary condition for the geometric ergodicity of the exchange algorithm. In practice, however, practitioners are more interested in the `reverse problem'. When the original algorithm $\pmh$ is geometrically ergodic (though not implementable in practice), it is of practitioner's main interest to establish the sufficient conditions for the geometrically ergodicity of the exchange algorithm. We will focus on the sufficient conditions in the next two subsections.

\subsubsection{Geometric ergodicity of the exchange algorithm when the likelihood ratio is uniformly non-negligible}

Example \ref{eg: exponential gamma} answers half of Question 2. That is, if $\pmh$ is geometrically ergodic, the exchange algorithm is not necessarily geometrically ergodic.  Now we focus on establishing sufficient conditions such that $\pex$ will `inherit' the convergence rate of $\pmh$. We start with the following lemma, which is slightly different from Corollary 11 in Roberts and Rosenthal \cite{roberts2008variance}.
\begin{lemma}\label{lem:uniform likelihood ratio on spectrum}
	Let $P_1$, $P_2$ be two reversible Markov transition kernels with the same stationary distribution $\pi$. If there exists $\epsilon > 0$ such that $P_1(x,M) \geq \epsilon P_2(x,M)$ for any $x$ and any measureable set $M$, and $P_2$ is geometrically ergodic, then $P_1$ is geometrically ergodic.
\end{lemma}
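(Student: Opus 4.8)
The plan is to reduce everything to the $L^2(\pi)$ spectral characterization of geometric ergodicity for reversible chains, item~3 of Theorem \ref{thm:geometrically ergodicity}: it suffices to exhibit $r' < 1$ with $\sigma(P_1) \subset [-r', r']$, i.e.\ to bound \emph{both} ends of the spectrum of $P_1$ away from $\pm 1$. The bridge from the kernel-domination hypothesis to the spectrum is the pair of quadratic forms on $L^2_0(\pi)$,
$$\mathcal{E}^{\mp}_{P}(f) := \langle f,\, (I \mp P) f\rangle_\pi = \tfrac{1}{2} \int\!\!\int \bigl(f(x) \mp f(y)\bigr)^2\, \pi(dx)\, P(x, dy),$$
the integral representation holding for any $\pi$-stationary kernel $P$. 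By the spectral theorem for the self-adjoint operator $P$ on $L^2_0(\pi)$ one has $\inf_{\|f\|_\pi = 1} \mathcal{E}^{-}_P(f) = 1 - \sup\sigma(P)$ and $\inf_{\|f\|_\pi = 1} \mathcal{E}^{+}_P(f) = 1 + \inf\sigma(P)$, so bounding each of these two forms below by a positive multiple of $\|f\|_\pi^2$ is exactly the two-sided spectral gap we need.

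First I would observe that $P_1(x, M) \geq \epsilon P_2(x, M)$ for every measurable $M$ is equivalent to saying $P_1(x, \cdot) - \epsilon P_2(x, \cdot)$ is a nonnegative measure for each $x$; hence $\int g(y)\, P_1(x, dy) \geq \epsilon \int g(y)\, P_2(x, dy)$ for every fixed $x$ and every nonnegative measurable $g$. Applying this with $g(y) = (f(x) \mp f(y))^2$ for each fixed $x$ and then integrating over $x$ against $\pi$ yields, for both signs,
$$\mathcal{E}^{\mp}_{P_1}(f) \;\geq\; \epsilon\, \mathcal{E}^{\mp}_{P_2}(f), \qquad f \in L^2_0(\pi).$$
Next, since $P_2$ is reversible and geometrically ergodic, Theorem \ref{thm:geometrically ergodicity} gives $\sigma(P_2) \subset [-r_2, r_2]$ with $r_2 < 1$, i.e.\ $\mathcal{E}^{-}_{P_2}(f) \geq (1 - r_2)\|f\|_\pi^2$ and $\mathcal{E}^{+}_{P_2}(f) \geq (1 - r_2)\|f\|_\pi^2$. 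Combining the two displays, $\mathcal{E}^{\mp}_{P_1}(f) \geq \epsilon(1 - r_2)\|f\|_\pi^2$, whence $\sup\sigma(P_1) \leq 1 - \epsilon(1 - r_2)$ and $\inf\sigma(P_1) \geq -1 + \epsilon(1 - r_2)$; that is, $\sigma(P_1) \subset [-r', r']$ with $r' = 1 - \epsilon(1 - r_2) < 1$. Since $P_1$ is reversible and (as assumed throughout, and as is in any case inherited from $P_2$ via $P_1^n(x,\cdot) \geq \epsilon^n P_2^n(x,\cdot)$) $\phi$-irreducible and aperiodic, Theorem \ref{thm:geometrically ergodicity} then yields that $P_1$ is geometrically ergodic.

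The delicate point is the $\inf\sigma(P_1) > -1$ half. Domination/Peskun-type comparisons normally deliver only a right spectral gap (variance bounding), and the tempting move of writing $P_1 = \epsilon P_2 + (1-\epsilon) R$ with $R := (P_1 - \epsilon P_2)/(1-\epsilon)$ stalls, because $R$ need not be $\pi$-reversible and so carries no self-adjoint spectrum to feed into an additivity argument. Working directly with the integral form of $\mathcal{E}^{+}_P$ circumvents this: the integrand $(f(x)+f(y))^2$ is nonnegative, so domination of the $+$ form is obtained in exactly the same line as domination of the $-$ form, with no reversibility of any auxiliary kernel entering. The only routine checks are that $\epsilon \leq 1$ necessarily (take $M$ to be the whole space), so the bounds above are non-vacuous, and that the standing irreducibility/aperiodicity hypotheses transfer to $P_1$.
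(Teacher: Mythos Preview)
Your proof is correct and gives the same quantitative bound $r' = 1 - \epsilon(1-r_2)$ that the paper obtains. The route, however, is genuinely different. The paper proceeds exactly via the decomposition you dismiss: it writes $P_1 = \epsilon P_2 + (1-\epsilon)P_{\mathrm{res}}$ with $P_{\mathrm{res}} = (P_1 - \epsilon P_2)/(1-\epsilon)$, checks that $P_{\mathrm{res}}$ is a valid Markov kernel, and then applies the triangle inequality for the operator norm on $L^2_0(\pi)$ together with $\|P_{\mathrm{res}}\|\leq 1$ to get $\|P_1\|\leq \epsilon r_2 + (1-\epsilon)<1$; since $P_1$ is self-adjoint, spectral radius equals operator norm and both ends of the spectrum are controlled simultaneously.

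Your claim that this approach ``stalls because $R$ need not be $\pi$-reversible'' is not right: $P_1$ and $P_2$ are both $\pi$-reversible by hypothesis, so their linear combination $P_{\mathrm{res}}$ is automatically self-adjoint on $L^2(\pi)$ (and $\pi$-stationary, which is all one needs for $\|P_{\mathrm{res}}\|\leq 1$). So the decomposition works without obstruction. That said, your Dirichlet-form argument is a perfectly good alternative: it sidesteps any discussion of the residual kernel altogether, handles the left and right spectral gaps in one line each via the nonnegativity of $(f(x)\mp f(y))^2$, and would continue to work in settings where one wants to compare forms rather than kernels. The paper's route is shorter once one accepts the residual-kernel bookkeeping; yours is more self-contained.
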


\begin{proof}
	We view both $P_1$ and $P_2$ as self-adjoint operators on $L^2_0(\pi)$. By assumption, we can write $P_1$ as 
	$
	P_1 = \epsilon P_2 + (1-\epsilon) P_{\mathrm{res}},
	$
	where $P_{\text{res}} =  \frac{P_1 - \epsilon P_2}{1-\epsilon}$ is a valid, self-adjoint Markov operator.
	
	Let $\sigma(P_1)$, the spectrum of $P_1$, be the set of all the complex numbers $\lambda$ such that $P_1 - \lambda I$ is not invertible, and define $\sigma(P_2), \sigma(P_{\text{res}})$ in the same way. It follows from the spectral theory of self-adjoint operators that the spectrum of both $P_1$ and $P_2$  are real. 
	Let $r(P_1) := \sup \{|\lambda|: \lambda \in \sigma(P_1) \} $ be the spectral radius of $P_1$, and define $r(P_2), r(P_{\mathrm{res}})$
	in the same way. Since $P_1, P_2, P_{\mathrm{res}}$ are all self-adjoint, the spectral radius of each operator coincides with their operator norms (see Proposition 9.2 of \cite{hunter2001applied} for a proof). Moreover, Theorem $2$ in \cite{roberts1997geometric} says a reversible Markov operator $P$ is geometrically ergodic if and only if $r(P)\leq r$ for $ 0< r < 1$.  Thus we can find $0< r_2<1$ such that $r(P_2) = \lVert P_2\rVert\leq r_2$.

Now we study the operator norm of $P_1$,  we have
  $$
  \lVert P_1 \rVert \leq \epsilon \lVert P_2 \rVert + (1-\epsilon) \lVert  P_{\mathrm{res}} \rVert \leq \epsilon r_2 + (1-\epsilon) < 1
  $$
  where the first inequality follows from the triangle inequality and the second inequality follows from the fact that $\lVert P_2 \rVert \leq r_2$  and $\lVert P_{\mathrm{res}} \rVert \leq 1 $ (since $P_{\mathrm{res}}$ is a Markov operator). Let $r_1 :=  r_2 + (1-\epsilon) < 1$, we now have $r(P_1) = \lVert P_1 \rVert \leq r_1$, thus we conclude $P_1$ is also geometrically ergodic. 
\end{proof}

To connect Lemma \ref{lem:uniform likelihood ratio on spectrum} with the geometric ergodicity of $\pex$. We need the following definition.
\begin{definition}[Uniformly Non-negligible Likelihood Ratio]\label{def: non-negligible}
	With all the notations as above, let $A_{\theta,\theta'}(s)$ be the set 
	\[
	A_{\theta,\theta'}(s) := \{x\in \mathcal X: p_\theta(x)>sp_{\theta'}(x)\}.
	\]
	The likelihood ratio function  is called \textbf{uniformly non-negligible} if there exist  $\epsilon > 0$ and $\delta > 0$, such that $
	\Prob_{\theta'}(A_{\theta,\theta'}(\delta)) > \epsilon$
	uniformly over $(\theta, \theta') \in \Theta\times \Theta$.
\end{definition}
The likelihood ratio function is uniformly non-negligible if the set $A_{\theta,\theta'}(\delta)$ has a uniformly positive probability. Our next result shows that, when the likelihood ratio function is non-negligible, then $\pex$ can `inherit' the geometric ergodicity from $\pmh$. 
\begin{theorem}\label{thm:geometric ergodicity inherit, uniform case}
	Suppose  the likelihood ratio function is uniformly non-negligible, and   $\pmh$ is uniformly/geometrically ergodic.  Then the exchange chain is also uniformly/geometrically ergodic, respectively.
\end{theorem}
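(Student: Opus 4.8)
The plan is to reduce the theorem to a single pointwise comparison between the two transition kernels: there is a constant $c\in(0,1]$ such that
\[
\pex(\theta, M) \;\geq\; c\,\pmh(\theta, M)\qquad\text{for every }\theta\in\Theta\text{ and every measurable }M\subseteq\Theta .
\]
Granting this, the geometrically ergodic case is immediate from Lemma \ref{lem:uniform likelihood ratio on spectrum} applied with $P_1=\pex$, $P_2=\pmh$ and the constant $c$, since both chains are reversible with the same stationary distribution $\pi(\cdot\mid x)$. For the uniformly ergodic case, iterating the kernel domination gives $\pex^{\,n}(\theta,\cdot)\geq c^{\,n}\,\pmh^{\,n}(\theta,\cdot)$ (one integrates the inequality against the nonnegative function $\pmh^{\,n-1}(\cdot,M)$ and induct), so a minorization $\pmh^{\,n_0}(\theta,\cdot)\geq \delta_0\nu(\cdot)$ supplied by Theorem \ref{thm:uniform ergodicity} transfers to $\pex^{\,n_0}(\theta,\cdot)\geq c^{\,n_0}\delta_0\,\nu(\cdot)$, and Theorem \ref{thm:uniform ergodicity} again yields uniform ergodicity of $\pex$.

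To prove the kernel domination it suffices to treat the continuous density parts, i.e. to show $\dpex(\theta,\theta')\geq c\,\dpmh(\theta,\theta')$ for $\theta\neq\theta'$: the comparison on the atom at $\theta$ is then automatic, since $\pex(\theta,\{\theta\})\geq\pmh(\theta,\{\theta\})\geq c\,\pmh(\theta,\{\theta\})$ by Lemma \ref{lem:Peskun_order} together with $c\leq 1$, and for a general $M$ one splits into $M\setminus\{\theta\}$ and $\{\theta\}$. For the densities, I would first record the algebraic identity used already in the proof of Lemma \ref{lem:Peskun_order}, namely $a(\theta,\theta',w) = a(\theta,\theta')\cdot \dfrac{p_\theta(w)}{p_{\theta'}(w)}$, where $a(\theta,\theta')=\dfrac{\pi(\theta'\mid x)q(\theta',\theta)}{\pi(\theta\mid x)q(\theta,\theta')}$ is the (intractable) Metropolis--Hastings ratio. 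Let $\epsilon,\delta$ be the constants from Definition \ref{def: non-negligible}, and note we may assume $\delta\leq 1$ because $A_{\theta,\theta'}(\delta)$ only enlarges as $\delta$ decreases. Restricting the expectation defining $\dpex$ to the event $\{w\in A_{\theta,\theta'}(\delta)\}$, on which $p_\theta(w)>\delta\,p_{\theta'}(w)$, and using the elementary bound $\min\{1,\delta t\}\geq \delta\min\{1,t\}$ for $\delta\in(0,1]$ and $t\geq 0$, one obtains
\[
\dpex(\theta,\theta') = q(\theta,\theta')\,\bE_{w\sim p_{\theta'}}\!\left[\min\{1,a(\theta,\theta',w)\}\right] \geq \delta\,\Prob_{\theta'}\!\big(A_{\theta,\theta'}(\delta)\big)\,q(\theta,\theta')\min\{1,a(\theta,\theta')\} \geq \delta\epsilon\,\dpmh(\theta,\theta'),
\]
so $c=\delta\epsilon$ works, uniformly in $(\theta,\theta')$.

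I do not anticipate a genuine obstacle here; the argument is short. The only points requiring care are bookkeeping ones: passing the domination from the continuous parts to the full kernels by handling the point mass at $\theta$ correctly, and checking that the constant $c$ produced is genuinely uniform over $\Theta\times\Theta$ — which is exactly what the hypothesis "uniformly non-negligible" provides, and which is precisely the requirement that tends to fail on unbounded parameter spaces and is relaxed in Theorem \ref{thm:Geometric ergodic inhirit}.
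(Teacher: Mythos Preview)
Your proposal is correct and follows essentially the same route as the paper: establish the uniform kernel domination $\pex(\theta,M)\geq \delta\epsilon\,\pmh(\theta,M)$ by restricting the expectation to $A_{\theta,\theta'}(\delta)$ for the off-diagonal densities and invoking Lemma~\ref{lem:Peskun_order} for the atom, then apply Lemma~\ref{lem:uniform likelihood ratio on spectrum} for the geometric case and iterate the domination into the minorization for the uniform case. If anything, you are slightly more careful than the paper in making explicit the harmless reduction to $\delta\leq 1$ and the elementary inequality $\min\{1,\delta t\}\geq \delta\min\{1,t\}$ that underlies the key bound.
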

\begin{proof}
	For every $\theta$, recall that $\pex(\theta,\cdot)$ can be represented as a mixture of a continuous density $\dpex$ and a point mass at $\theta$, the continuous density part follows
	\begin{align*}
	\dpex(\theta,\theta')  &= q(\theta,\theta') \bE_{p_{\theta'}}{\min\{\frac{\pi(\theta')q(\theta',\theta)f_{\theta'}(x)}{\pi(\theta)q(\theta,\theta')f_{\theta}(x)}\cdot \frac{f_\theta(w)}{f_{\theta'}(w)},1\}} \\
	& = q(\theta,\theta')\bE_{p_{\theta'}}{\min\{\frac{\pi(\theta'|x)q(\theta',\theta)}{\pi(\theta|x)q(\theta,\theta')}\cdot \frac{p_\theta(w)}{p_{\theta'}(w)},1\}} \\
	& \geq \delta \Prob_{\theta'}(A_{\theta,\theta'}(\delta))q(\theta,\theta')\min\{\frac{\pi(\theta'|x)q(\theta',\theta)}{\pi(\theta|x)q(\theta,\theta')},1\} \\
	& = \delta \Prob_{\theta'}(A_{\theta,\theta'}(\delta)) \dpmh(\theta,\theta')\\
	& \geq \delta \epsilon \dpmh(\theta,\theta').
	\end{align*}
	Lemma \ref{lem:Peskun_order} proves $\dpex(\theta,\{\theta\}) \geq \dpmh(\theta,\{\theta\})$ for all $\theta\in \Theta$, therefore we have $\pex(\theta, M) \geq \delta\epsilon \pmh(\theta,M)$ for any $\theta$ and any measurable set $M$. If $\pmh$ is geometrically ergodic, it follows directly from Lemma \ref{lem:uniform likelihood ratio on spectrum} that $\pex$ is geometrically ergodic.  	If $\pmh$ is uniformly ergodic, Theorem \ref{thm:uniform ergodicity} proves there exists $n_0$ and $\nu$ such that $\pmh^{n_0}(\theta, \cdot)\geq \nu(\cdot)$ for any $\theta\in \Theta$. On the other hand, we have:
	\[
	\pex^{n_0}(\theta,\cdot) \geq (\epsilon\delta)^{n_0} \pmh^{n_0}(\theta,\cdot)\geq (\epsilon\delta)^{n_0} \nu(\cdot),
	\]
	so $\pex$ also satisfies the minorization condition on the whole space.  It then follows from Theorem \ref{thm:uniform ergodicity} that $\pex$ is uniformly ergodic. 
\end{proof}
Theorem \ref{thm:geometric ergodicity inherit, uniform case} can be applied to almost all the cases where the parameter space is bounded or compact. We provide one example here:
\begin{example}[Beta-Binomial model]\label{eg:Beta-Binomial 1}
	Consider the following Beta-Binomial example. Let 
	\begin{align*}
	p_\theta(x) = \binom{n}{x} \theta^x (1-\theta)^{n-x}
	\end{align*}
	be the Binomial distribution with parameter $\theta\in\Theta = [\theta_1,\theta_2]$ where $0 < \theta_1 < \theta_2 < 1$, let 
	\[
	\pi(\theta) \propto \theta^{a-1} (1-\theta)^{b-1} \mathbb I(\theta \in [\theta_1, \theta_2])
	\]
	be a truncated Beta prior on $[\theta_1,\theta_2]$, where $a,b$ are prefixed positive numbers. Given data $x$, we would like to sample from the posterior distribution $\pi(\theta|x)$ with density
	\[
	\pi(\theta|x) \propto \theta^{a+x-1} (1-\theta)^{n+b-x-1} \mathbb I(\theta \in [\theta_1, \theta_2])\] which is a truncated $\mathrm{Beta} (x+a,n-x+b)$ distributed random variable.
	
	Consider  an  independence  Metropolis--Hastings sampler with a $\mathrm{Unif}[\theta_1,\theta_2]$ proposal distribution, then the continuous part of $\pmh$ follows:
	\[
	\dpmh(\theta, \theta')=
	\min\bigg\{1,  \frac{(\theta')^{a+x-1}(1-\theta')^{n+b-x-1}}{\theta^{a+x-1} (1-\theta)^{n+b-x-1}}\bigg\}.
	\]

Let $\theta^\star$ be the maximizer for $\theta^{a+x-1} (1-\theta)^{n+b-x-1}$, for $\theta\neq \theta'$, we have:
\[
\dpmh(\theta,\theta') \geq  \frac{C(\theta_1,\theta_2, a, b)} {(1-\theta^\star)^{n+b-x-1}}\pi(\theta'|x),
\]
where  $C(\theta_1,\theta_2,a,b)$ is the normalizing constant for the truncated Beta distribution.
 Therefore $\dpmh$ is uniformly ergodic by Theorem \ref{thm:uniform ergodicity}. Furthermore,  fix $\delta \in (0,1)$,  given any $\theta, \theta'\in [\theta_1, \theta_2]$, the set $A_{\theta,\theta'}(\delta)$ has a positive probability under $p_{\theta'}$. Since $\theta,\theta'$ only take values from a compact set $[\theta_1,\theta_2]\times [\theta_1,\theta_2]$, there exists a constant $\epsilon>0$ such that 
\[
\Prob_{\theta'}(A_{\theta,\theta'}(\delta)) > \epsilon
\]
uniformly. Hence the exchange chain is uniformly ergodic by Theorem \ref{thm:geometric ergodicity inherit, uniform case}.
\end{example}

However, the uniform probability condition for $A_{\theta,\theta'}(\delta)$ is usually too strong when the parameter space is unbounded. Note that 
\[
1 - \lVert p_\theta - p_{\theta'} \rVert_\TV = \int \min\{p_\theta(x), p_{\theta'}(x) \} dx = \int  \min\{\frac{p_\theta(x)}{p_{\theta'}(x)},1 \} p_{\theta'}(x) dx \geq \delta \Prob_{\theta'}(A_{\theta,\theta'}(\delta)).
\]

Therefore the condition $\Prob_{\theta'}(A_{\theta,\theta'}(\delta)) > \epsilon$ directly implies $\lVert p_\theta - p_{\theta'} \rVert_\TV \leq 1-\epsilon\delta$ uniformly over $\theta,\theta'$. However, when the parameter space is $\Theta$ is unbounded (say $\bR^n$), most   practical models will have $p_\theta$ and $p_{\theta'}$ far away from each other, i.e, 
\[
\mathrm{esssup}_{\theta,\theta'}\lVert p_\theta - p_{\theta'} \rVert_\TV = 1.
\]
Thus Theorem \ref{thm:geometric ergodicity inherit, uniform case} can not be directly applied in these cases. The next part gives weaker sufficient conditions which can be applied in unbounded parameter space.

\subsubsection{Geometric ergodicity of the exchange algorithm with a random-walk proposal }

To establish sufficient conditions on unbounded parameter space, we will first need the following assumptions on the $\pmh$, the conditions are similar to \cite{mengersen1996rates}\cite{roberts1996geometric}\cite{jarner2000geometric}  and are usually reasonable in practical settings. It is known \cite{roberts1996geometric}\cite{johnson2012variable} that the sufficient conditions for the geometric ergodicity of multi-dimensional random-walk Metropolis--Hastings algorithms are slightly stronger than conditions for one-dimensional algorithms. Therefore we will discuss the  cases where the dimensionality of the state space equals one and is greater than one separately. 

When  the  state space is $\bR^d$ with $d > 1$, we will say that a Metropolis--Hastings chain $\pmh$ satisfies assumption ($\mathcal A$) if it:

\begin{itemize}
	\itemaone  \label{ass:expo tail} has a target density $\pi$ which is positive and has continuous first derivatives such that:
	\begin{enumerate}
		\item $\lim\limits_{\lVert \theta \rVert \rightarrow \infty} \frac{\theta}{\lVert \theta \rVert}\cdot\nabla \log\pi(\theta) = -\infty$
		\item $\limsup\limits_{\lVert \theta \rVert \rightarrow \infty} \frac{\theta}{\lVert \theta \rVert}\cdot\frac{\nabla \log\pi(\theta)}{\lVert \nabla \log\pi(\theta) \rVert} < 0$
    \end{enumerate}
      \itematwo \label{ass: random walk}has a random-walk proposal density $q$, that is, $q(\theta, \theta') = q(\theta',\theta) = q(\lVert\theta-\theta'\rVert) $. Furthermore, $q$ is bounded away from $0$ in a neigborhood of the origin, which means there exists some $\delta_q > 0$ and $\epsilon_q >0$ such that $q(s_1, s_2)\geq \epsilon_q$ if $\lVert s_1 - s_2\rVert\leq \delta_q$.

    %  has a target density $\pi$ which is log-concave in the tails, that is, there exists some  constant $\alpha > 0$ and $x_1 > 0$ such that, for all $y > x > x_1$:
	%\[
	%\log \pi(x) - \log \pi(y)  \geq \alpha (y - x),
	%\]
	%and for all $ y < x < -x_1$:
	%\[
	%\log \pi(x) - \log \pi(y)  \geq \alpha (x - y).
	%\]
\end{itemize}

Assumption $(\mathcal A)$ is the condition in Theorem 4.3 of Jarner and Hanson \cite{jarner2000geometric}. It is also a generalization of the results in Roberts and Tweedie \cite{roberts1996geometric}. It is shown that any random-walk Metropolis--Hastings chain satisfying $(\mathcal A)$ is geometrically ergodic.  Assumption ($\mathcal A2$)  requires a random-walk proposal kernel. The first half of  ($\mathcal A1$)  requires the tail of  the target density decays super-exponentially when $\lVert \theta \rVert$ goes to infinity. The second half of ($\mathcal A1$) is a  curvature condition which requires the contour manifolds of the target density is non-degenerate in the tails. It is shown in Theorem 4.4 of  \cite{jarner2000geometric} that assumption  $(\mathcal A)$ is stable under translation, rotation, positive linear combination, and pointwise multiplication. Densities satisfying $(\mathcal A)$ includes multivariate Gaussian, mixture of multivariate Gaussians, and densities of the form $\pi(\theta)\sim h(\theta)e^{-p(\theta)}$ where $h$ is a positive multivariate polynomial, $p$ is a multivariate polynomial with order $m\geq 2$ and the highest order terms $p_m(\theta) \rightarrow \infty $ as $\lVert \theta\rVert \rightarrow \infty$. See \cite{roberts1996geometric} and \cite{jarner2000geometric} for more discussions. 

When the state space is $\bR$, ($\mathcal A1$) can be replaced by a much weaker and essentially necessary assumption, though ($\mathcal A2$) needs to be strengthened a little bit.  We will say that a Metropolis--Hastings chain on $\bR$  satisfies assumption ($\tilde{\mathcal A}$) if it:

\begin{itemize}
	\itemtaone  \label{ass:1dexpo tail}   has a target density $\pi$ which is positive and  there exists some  constant $\alpha > 0$ and $x_1 > 0$ such that, for all $y > x > x_1$:
	\[
	\log \pi(x) - \log \pi(y)  \geq \alpha (y - x),
	\]
	and for all $ y < x < -x_1$:
	\[
	\log \pi(x) - \log \pi(y)  \geq \alpha (x - y).
	\]
	
	\itemtatwo  \label{ass: 1drandom walk}has a random-walk proposal density $q$, that is, $q(\theta, \theta') = q(\theta',\theta) = q(\lvert\theta-\theta'\rvert)$. Furthermore, with $\alpha$ defined as above, there exists a finite $b$ such that $q(s)\leq b e^{-\alpha s}$ for every non-negative $s$.
	\end{itemize}

Assumption $(\tilde A)$ is the condition in Theorem 3.2 of Mengersen and Tweedie \cite{mengersen1996rates}. It is shown that every random-walk Metropolis--Hastings chain on $\bR$ is geometrically ergodic providing it satisfies $(\tilde{\mathcal A})$. Assumption $(\tilde A)$  covers many posterior distributions as Gaussian, Gamma,
exponential from typically-used statistical models.

Now we are ready to provide   a sufficient condition for the geometric ergodicity of $\pex$ 
\begin{theorem}\label{thm:Geometric ergodic inhirit}
	Let $\pmh$ be a random-walk Metropolis--Hastings chain on $\Theta$ with posterior distribution $\pi(\theta|x)$ as stationary distribution. Suppose  $\pmh$ satisfies Assumption $(\mathcal A)$ if $\Theta = \bR^d$ with $d>1$, or Assumption $(\tilde{\mathcal A})$ if $\Theta = \bR$. Furthermore,  assume there exists a  continuous function  $c(s)$ from $\bR_{\geq 0}$ to $[0,1]$ with $c(0) = 0$ such that
	\[
	\lVert p_\theta - p_{\theta'} \rVert_\TV \leq c(\lVert \theta - \theta'\rVert),
	\]
	where $p_\theta$ is the model's likelihood.	Then the exchange chain $\pex$ is also geometrically ergodic.
\end{theorem}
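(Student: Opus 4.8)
The plan is to derive \ref{thm:Geometric ergodic inhirit} from the comparison Lemma~\ref{lem:uniform likelihood ratio on spectrum}, applied not to $\pmh$ itself but to a \emph{truncated} random-walk chain. One cannot compare $\pex$ with $\pmh$ directly: for $\lVert\theta-\theta'\rVert$ large the measures $p_\theta$ and $p_{\theta'}$ may be almost mutually singular, so $\dpex(\theta,\theta')/\dpmh(\theta,\theta')$ has no uniform positive lower bound and the comparison collapses. The ``change of kernel'' remedy is to shrink the proposal. Since $c$ is continuous with $c(0)=0$, pick $R>0$ with $c(R)<1$ and set $\eta:=1-c(R)>0$. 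Let $c_R:=\int q(\theta,\theta')\indc{\lVert\theta-\theta'\rVert\le R}\,d\theta'>0$ (a constant, by translation invariance of $q$ and positivity of $q$ near the origin), and put $\tilde q(\theta,\theta'):=c_R^{-1}q(\theta,\theta')\indc{\lVert\theta-\theta'\rVert\le R}$. Let $\tilde P_{\mathrm{MH}}$ be the random-walk Metropolis--Hastings chain with proposal $\tilde q$ and stationary distribution $\pi(\cdot|x)$, with continuous part $\tilde d_{\mathrm{MH}}$.

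First I would verify that $\tilde P_{\mathrm{MH}}$ is still geometrically ergodic. Its target is unchanged, so $(\mathcal A1)$ (resp.\ $(\tilde{\mathcal A}1)$) still holds; and $\tilde q$ is symmetric and translation invariant, is bounded below by $\epsilon_q$ on the ball of radius $\min\{\delta_q,R\}$, and in the one-dimensional case obeys $\tilde q(s)\le(b/c_R)e^{-\alpha s}$, so $(\mathcal A2)$ (resp.\ $(\tilde{\mathcal A}2)$) holds for $\tilde q$. Thus Assumption $(\mathcal A)$ (resp.\ $(\tilde{\mathcal A})$) holds for $\tilde P_{\mathrm{MH}}$, and by Theorem~4.3 of \cite{jarner2000geometric} (if $\Theta=\bR^d$, $d>1$) or Theorem~3.2 of \cite{mengersen1996rates} (if $\Theta=\bR$) the chain $\tilde P_{\mathrm{MH}}$ is geometrically ergodic.

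Next I would prove the pointwise domination $\pex(\theta,M)\ge\eta c_R\,\tilde P_{\mathrm{MH}}(\theta,M)$ for every $\theta$ and every measurable $M$. Off the diagonal, since $q$ is symmetric, $\dpex(\theta,\theta')=q(\theta,\theta')\,\bE_{w\sim p_{\theta'}}\!\min\{\tfrac{\pi(\theta'|x)}{\pi(\theta|x)}\tfrac{p_\theta(w)}{p_{\theta'}(w)},1\}$; the elementary inequality $\min\{ab,1\}\ge\min\{a,1\}\min\{b,1\}$ for $a,b\ge0$, applied pointwise in $w$ and then integrated against $p_{\theta'}$, together with $\bE_{w\sim p_{\theta'}}\!\min\{\tfrac{p_\theta(w)}{p_{\theta'}(w)},1\}=1-\lVert p_\theta-p_{\theta'}\rVert_\TV$, gives $\dpex(\theta,\theta')\ge(1-c(\lVert\theta-\theta'\rVert))\,\dpmh(\theta,\theta')$. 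For $\lVert\theta-\theta'\rVert\le R$ this is $\ge\eta\,\dpmh(\theta,\theta')=\eta c_R\,\tilde d_{\mathrm{MH}}(\theta,\theta')$, while for $\lVert\theta-\theta'\rVert>R$ we have $\tilde d_{\mathrm{MH}}(\theta,\theta')=0\le\dpex(\theta,\theta')$. For the atom, Lemma~\ref{lem:Peskun_order} gives $\dpex\le\dpmh\le q$, whence $\int\dpex\,d\theta'\le\int_{\lVert\theta-\theta'\rVert\le R}\dpmh\,d\theta'+(1-c_R)$; combined with $\int_{\lVert\theta-\theta'\rVert\le R}\dpmh\,d\theta'\le c_R$ and $1-\eta\ge0$ this rearranges to $\pex(\theta,\{\theta\})=1-\int\dpex\,d\theta'\ge\eta c_R\bigl(1-c_R^{-1}\!\int_{\lVert\theta-\theta'\rVert\le R}\dpmh\,d\theta'\bigr)=\eta c_R\,\tilde P_{\mathrm{MH}}(\theta,\{\theta\})$. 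Adding the diagonal and off-diagonal estimates gives the claimed domination with constant $\eta c_R>0$.

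Finally, $\pex$ and $\tilde P_{\mathrm{MH}}$ are reversible with the same stationary distribution $\pi(\cdot|x)$, $\tilde P_{\mathrm{MH}}$ is geometrically ergodic, and $\pex\ge\eta c_R\,\tilde P_{\mathrm{MH}}$ pointwise, so Lemma~\ref{lem:uniform likelihood ratio on spectrum} yields that $\pex$ is geometrically ergodic. The substantive point is the truncation: it keeps the auxiliary chain within the reach of the random-walk geometric-ergodicity theorems while forcing the likelihood-ratio comparison to be uniform. The one place that needs a little care is the atom estimate --- that $\pex(\theta,\{\theta\})$ still dominates a fixed multiple of $\tilde P_{\mathrm{MH}}(\theta,\{\theta\})$ --- which is where the Peskun bounds $\dpex\le\dpmh\le q$ do the work.
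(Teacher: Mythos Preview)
Your argument is correct in spirit and follows the same ``change of kernel'' philosophy as the paper, but you implement it differently. The paper reweights the proposal continuously, taking $\tilde q(\theta,\theta')\propto (1-c(\lVert\theta-\theta'\rVert))\,q(\lVert\theta-\theta'\rVert)$, so that the inequality $\dpex(\theta,\theta')\ge\dpmh(\theta,\theta')(1-c(\lVert\theta-\theta'\rVert))$ translates \emph{exactly} into $\dpex\ge C_q\,\tilde p_{\mathrm{MH}}$ with $C_q$ the normalising constant. You instead truncate the proposal to a ball of radius $R$, which is cruder but just as effective: the truncated proposal still satisfies $(\mathcal A2)$ or $(\tilde{\mathcal A}2)$, and the atom estimate you give (via $\dpex\le\dpmh\le q$) goes through. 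Both routes feed the same comparison lemma; the paper's reweighting is slightly more elegant because the off-diagonal comparison is an identity rather than an inequality, while your truncation makes the verification of $(\mathcal A2)$ marginally more transparent.

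There is one small slip. You set $\eta:=1-c(R)$ and then claim that for $\lVert\theta-\theta'\rVert\le R$ one has $(1-c(\lVert\theta-\theta'\rVert))\,\dpmh(\theta,\theta')\ge\eta\,\dpmh(\theta,\theta')$. This needs $c(s)\le c(R)$ for all $s\le R$, i.e.\ monotonicity of $c$, which the theorem does not assume. The fix is immediate: either replace $c$ at the outset by the nondecreasing envelope $\bar c(s):=\sup_{t\le s}c(t)$ (still continuous, still $\bar c(0)=0$, and still a valid bound on $\lVert p_\theta-p_{\theta'}\rVert_\TV$), or define $\eta:=1-\max_{s\in[0,R]}c(s)$ and choose $R$ small enough, by continuity at $0$, that this maximum is strictly below $1$. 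With that adjustment your proof is complete.
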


\begin{proof}
	We will assume  $\Theta = \bR^d$ with $d > 1$ as the case $d = 1$ can be proved in  the same way. 	First, for any $\theta\neq \theta'$, straightforward calculation gives
	\begin{align*}
	\dpex(\theta,\theta')  &= q(\theta,\theta') \bE_{p_{\theta'}}{\min\{\frac{\pi(\theta')q(\theta',\theta)f_{\theta'}(x)}{\pi(\theta)q(\theta,\theta')f_{\theta}(x)}\cdot \frac{f_\theta(w)}{f_{\theta'}(w)},1\}} \\
	& = q(\theta,\theta')\bE_{p_{\theta'}}{\min\{\frac{\pi(\theta'|x)q(\theta',\theta)}{\pi(\theta|x)q(\theta,\theta')}\cdot \frac{p_\theta(w)}{p_{\theta'}(w)},1\}} \\
	&\geq q(\theta,\theta')\bE_{p_{\theta'}}{\min\{\frac{\pi(\theta'|x)q(\theta',\theta)}{\pi(\theta|x)q(\theta,\theta')},1\}} \min
	\{ \frac{p_\theta(w)}{p_{\theta'}(w)},1\}\\
	&= \dpmh(\theta,\theta') (1 - \lVert p_\theta - p_{\theta'} \rVert_\TV)\\
	&\geq \dpmh(\theta,\theta')(1 - c(\lVert\theta - \theta'\rVert)).
	\end{align*}
	Next, let $\tilde{P}_{\text{MH}}$ be another random walk MH chain with proposal density $\tilde q$ proportional to $(1-c)q$, that is, 
	$\tilde	q(\theta,\theta') = \frac{ (1-c(\lVert \theta -\theta'\rVert)) q(\lVert \theta - \theta' \rVert)}{C_q}$ where $C_q$ is the normalizing constant that does not depend on $\theta, \theta'$. 
	
	Then we will   check  that $\tilde{P}_{\text{MH}}$  also satisfies $(\mathcal A)$. It is clear that $\tilde{P}_{\mathrm{MH}}$ still satisfies $(\mathcal A1)$ as the target distribution is still $\pi$. To check  $(\mathcal A2)$, the above expression  shows $\tilde q$ is still symmetric. Meanwhile, since there exists some $\delta_q > 0$ and $\epsilon_q >0$ such that $q(\theta_1, \theta_2)\geq \epsilon_q$ for $\Vert \theta_1 - \theta_2 \rVert \leq \delta_q$,
	we can take $\tilde \delta_q$  so small such that $\tilde \delta_q \leq \delta_q$ and $ c(\lVert s\rVert) \leq 1/2$ for every $\lVert s \lVert \leq \tilde\delta_q$. Therefore we have
	$
	\tilde q(\theta, \theta') \geq \tilde \epsilon_q
	$
	when $\lVert \theta - \theta' \lVert \leq \tilde\delta_q$,
	where $
	\tilde \epsilon_q := \frac{\epsilon_q}{2C_q}.$
	
	Therefore the new MH chain with transition kernel $\tilde{P}_{\mathrm{MH}}$
	satisfies $(\mathcal A)$ and it follows directly from Theorem 4.3 in Jarner and Hanson \cite{jarner2000geometric} that   $\tilde{P}_{\text{MH}}$ is also geometrically ergodic.  
	
	Now we compare $\pex$ with $\tilde{P}_{\mathrm{MH}}$. Our previous calculation shows, for any $\theta\neq \theta'$:
	\[
	\dpex(\theta,\theta') \geq \dpmh(\theta,\theta')(1 - c( \lVert\theta - \theta'\rVert)) = C_q \tilde{p}_{\text{MH}}(\theta,\theta').
	\]
	
	When $\theta = \theta'$,
	\[
	\tilde{P}_{\mathrm{MH}}(\theta,\{\theta\}) = 1 - \frac{\int (1-c(\lVert\theta-\theta'\rVert)) q(\lVert\theta-\theta'\rVert)\min\{\pi(\theta'|x)/\pi(\theta|x),1\}d\theta'}{C_q},
	\]
	we have:
	\begin{align*}
	C_q \tilde{P}_{\mathrm{MH}}(\theta,\{\theta\}) &=
	\int_{\bR^d} (1-c(\lVert\theta-\theta'\rVert)) q(\lVert\theta-\theta'\rVert) (1-\min\{\pi(\theta'|x)/\pi(\theta|x),1\})d\theta'\\
	&\leq \int q(\lVert\theta-\theta'\rVert) (1-\min\{\pi(\theta'|x)/\pi(\theta|x),1\})d\theta'\\
	&= \pmh(\theta,\{\theta\}) \leq \pex(\theta,\{\theta\}).
	\end{align*}
	Therefore, for any $\theta,\theta'$, we have
	$\pex(\theta,A) \geq C_q \tilde{P}_{\mathrm{MH}}(\theta,A)$ for every $\theta$ and measurable set $A$. We conclude $\pex$ is also geometrically ergodic by Lemma \ref{lem:uniform likelihood ratio on spectrum}.
\end{proof}

\begin{remark}
	Let $(\mathcal P(\bR^d), \TV)$ be the set of all the probability measures on $\mathbb{R}^d$, equipped with total variation metric. Then the key condition $\lVert p_\theta - p_{\theta'} \rVert_\TV \leq c(\lVert \theta - \theta'\rVert)$ is essentially requiring the map $T: \mathbb R^d \rightarrow \mathcal P(\bR^d)$ being uniformly continuous.
\end{remark}

\subsubsection{A Central Limit Theorem (CLT) for the exchange algorithm}
Let $X = \{X_1, \cdots, X_n, \cdots \}$ be a Markov chain starting from stationary distribution $\pi$. Let $h\in L^2_0(\pi)$, that is, $\bE_\pi(h) = 0$ and $\bE_\pi(h^2)< \infty$. We say a $\sqrt n$-CLT exists for $(h, X)$ if:
\[
\frac{\sum_{i= 0}^n h(X_i)}{\sqrt n} \rightarrow \Nor(0, \sigma^2(X,h)),
\]
for some $ \sigma^2(X,h) < \infty$. Furthermore, we call $ \sigma^2(X,h)$ the asymptotic variance. The CLT and asymptotic variance for general Markov chains is studied extensively in \cite{kipnis1986central}, \cite{mira2001ordering} and \cite{roberts2008variance}. The next theorem gives a CLT for $\pex$, as well as bounds for its asymptotic variance. Before stating the theorem, we briefly review some  preliminary results in spectral theory, further discussions can be found in \cite{chan1994discussion}, \cite{kipnis1986central} and \cite{conway2019course}. Let $H$ be a Hilbert space. Let $M$ be a self-adjoint operator with spectrum $\sigma(M)$ from $H$ to $H$. The spectral theorem guarantees that $M$ has an associated spectral measure $\calE^M$, i.e., a map from Borel subsets of $\sigma(M)$ to self-adjoint projection operators on $M$. Moreover, let $v$ be an element in $H$, the spectral measure  $\calE^M$ further induces a  Borel measure $\calE^M_v$ on $\sigma(M)$ defined as:
\[
\calE^M_v(B) := (\calE^M(B)v, v).
\]
In the following theorem, the Hilbert space is taken as $L^2_0(\pi(\cdot|x))$ and the self-adjoint operator is taken as $\pex$.
\begin{theorem}\label{thm:CLT for exchange}
	Let $\pex$ be a geometrically ergodic Markov chain with stationary distribution $\pi(\theta|x)$. Then we have
	\begin{itemize}
		\item  For any $h \in L^2_0(\pi(\cdot|x))$, a $\sqrt n$-CLT exists for $(h, \pex)$.  
		\item The asymptotic variance $ \sigma^2(\pex,h)$ has the following  representation:
		\begin{equation}\label{eqn:asymptotic variance}
		\sigma^2(\pex,h) = \int_{-1}^1 \frac{1+ \lambda}{1 - \lambda} \mathcal E^{\text{EX}}_h (d\lambda),
		\end{equation}
		where $ \mathcal E^{\text{EX}}_h (d\lambda)$ is the Borel measure induced by $h$ and the spectral measure $ \mathcal E^{\text{EX}}$.
		\item The relationship between $\sigma^2(\pex,h)$ and  $\sigma^2(\pmh,h)$ is given by:
		\begin{equation}\label{eqn:inequality av}
		\sigma^2(\pmh,h) \leq \sigma^2(\pex,h)	\leq \frac {1 - m(\pmh)}{1 + m(\pmh)}\frac{2}{1 - M(\pex)}\sigma^2(\pmh,h)
		\end{equation}
		for any $h \in L^2_0(\pi(\cdot|x))$, where $m(\pmh)$ is the infimum of the spectrum of $\pmh$  on $L^2_0(\pi(\cdot|x))$. Similarly,  $M(\pex)$ is the supremum of the spectrum of  $\pex$  on $L^2_0(\pi(\cdot|x))$. Notice that this bound does not depend on $h$.
	\end{itemize}
	
\end{theorem}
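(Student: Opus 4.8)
The plan is to dispatch the three bullet points in order, using only the spectral description of geometric ergodicity from Theorem~\ref{thm:geometrically ergodicity}(3), the Peskun comparison of Lemma~\ref{lem:Peskun_order} / Theorem~\ref{thm:asymptotic variance}, and standard self-adjoint spectral calculus on $L^2_0(\pi(\cdot\mid x))$.

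\emph{Existence of the CLT and the spectral representation.} Since $\pex$ is reversible and geometrically ergodic, Theorem~\ref{thm:geometrically ergodicity}(3) gives $\sigma(\pex)\subset[-r,r]$ for some $r<1$, so $\pex$, viewed as a self-adjoint operator on $L^2_0(\pi(\cdot\mid x))$, has $1\notin\sigma(\pex)$ and $(I-\pex)^{-1}$ bounded. The $\sqrt n$-CLT for $(h,\pex)$ for every $h\in L^2_0(\pi(\cdot\mid x))$ is then the Kipnis--Varadhan theorem \cite{kipnis1986central} (the classical fact that a reversible, geometrically ergodic chain admits a CLT for every square-integrable functional); I would simply invoke it. For the representation, expanding $\frac1n\var_{\pex}\!\big(\sum_{i}h(X_i)\big) = \lVert h\rVert^2 + 2\sum_{k\ge1}(1-k/n)^+\langle h,\pex^k h\rangle$ and using the geometric decay $|\langle h,\pex^k h\rangle|\le r^k\lVert h\rVert^2$ to pass to the limit yields $\sigma^2(\pex,h)=\langle h,(I+\pex)(I-\pex)^{-1}h\rangle$; applying the spectral theorem with the function $\lambda\mapsto\frac{1+\lambda}{1-\lambda}$ (continuous on $\sigma(\pex)\subset(-1,1)$) turns this quadratic form into $\int_{\sigma(\pex)}\frac{1+\lambda}{1-\lambda}\,\mathcal E^{\mathrm{EX}}_h(d\lambda)$, and enlarging the domain of integration to $[-1,1]$ is harmless, giving \eqref{eqn:asymptotic variance}. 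This is precisely the representation used in \cite{kipnis1986central} and \cite{roberts2008variance}.

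\emph{The two-sided bound.} The lower bound $\sigma^2(\pmh,h)\le\sigma^2(\pex,h)$ is exactly Theorem~\ref{thm:asymptotic variance} (the ordering $\pmh\succ\pex$). For the upper bound, note first that $\pmh\succ\pex$ also forces $M(\pmh)\le M(\pex)<1$ by Lemma~3 of Tierney \cite{tierney1998note}, so the spectral representation above holds verbatim for $\pmh$ as well and $\sigma^2(\pmh,h)$ is finite. Since $\frac{1+\lambda}{1-\lambda}$ is increasing on $(-1,1)$, and $\mathcal E^{\mathrm{EX}}_h$ (resp.\ $\mathcal E^{\mathrm{MH}}_h$) is a nonnegative measure of total mass $\lVert h\rVert^2$ supported in $\sigma(\pex)$ (resp.\ $\sigma(\pmh)$), bounding the integrand by its value at the right endpoint $M(\pex)$ and by its value at the left endpoint $m(\pmh)$ respectively gives
\[
\sigma^2(\pex,h)\ \le\ \frac{1+M(\pex)}{1-M(\pex)}\,\lVert h\rVert^2\ \le\ \frac{2}{1-M(\pex)}\,\lVert h\rVert^2,
\qquad
\sigma^2(\pmh,h)\ \ge\ \frac{1+m(\pmh)}{1-m(\pmh)}\,\lVert h\rVert^2 .
\]
Eliminating $\lVert h\rVert^2$ between the two displays produces the claimed bound, which is manifestly independent of $h$.

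\emph{Anticipated difficulties.} There is no genuine obstacle here; the work is bookkeeping. The two points requiring care are: (i) checking that $\sigma^2(\pex,h)$, defined as the Cesàro limit of scaled variances, coincides with the spectral integral — this is where geometric ergodicity (the gap at $\lambda=1$) is genuinely used, and it is exactly the content of the Kipnis--Varadhan machinery, so I would cite rather than reprove it; and (ii) verifying $M(\pmh)<1$ so that the representation is available for $\pmh$ and $\sigma^2(\pmh,h)<\infty$, which follows from Tierney's lemma as above. If $\pmh$ had spectrum not bounded away from $-1$, the factor $\tfrac{1-m(\pmh)}{1+m(\pmh)}$ would degenerate and the right-hand inequality would be vacuously true; under the standing hypotheses ($\pex$ geometrically ergodic, hence $M(\pmh)\le M(\pex)<1$), all quantities are finite and the bound is informative.
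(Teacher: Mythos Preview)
Your proposal is correct and follows essentially the same route as the paper: cite Kipnis--Varadhan (and Geyer) for the CLT and the spectral representation, invoke the Peskun ordering via Tierney for the lower bound, and obtain the upper bound by sandwiching the integrand $\tfrac{1+\lambda}{1-\lambda}$ between its values at $m(\pmh)$ and $M(\pex)$ and eliminating $\lVert h\rVert^2$. You supply a bit more detail than the paper does (the Ces\`aro-sum justification of the spectral integral, and the observation that $M(\pmh)\le M(\pex)<1$ so the representation applies to $\pmh$ as well), but the argument is the same.
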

\begin{proof}[Proof of Theorem \ref{thm:CLT for exchange}] 
	Let $\mathcal E^\mathrm{EX}$ be the spectral measure for $\pex$ and let $\calE^\mathrm{EX}_h$ be the induced Borel measure.
	The existence of $\sqrt n$-CLT and the expression for the asymptotic variance follows from \cite{kipnis1986central}, page 3 and \cite{geyer1992practical}, Theorem 2.1. Furthermore,  $\pex$ is dominated by $\pmh$ in the Peskun's order as shown in Theorem \ref{thm:lazy MH chain}, it implies the operator $\pex - \pmh$ is positive. Moreover, it follows from Theorem 4 in \cite{tierney1998note} that
	$$\sigma^2(\pmh,h) \leq \sigma^2(\pex,h).$$ 
	
	The right part of inequality \ref{eqn:inequality av} is established by writing  both $\sigma^2(\pex,h)$ and $\sigma^2(\pmh,h)$ as integration with respect to their spectral measures and using the facts 
	$$\sigma^2(\pex,h) \leq \frac{2}{1 - M(\pex)}\bE_{\pi(\cdot|x)}(h^2)$$ and 
	$$\sigma^2(\pmh,h) \geq \frac{1+ m(\pmh)}{1 - m(\pmh)}\bE_{\pi(\cdot|x)}(h^2).$$
	
\end{proof}

Theorem \ref{thm:lazy MH chain}, \ref{thm:Geometric ergodic inhirit} and \ref{thm:CLT for exchange} gives theoretical results of exchange algorithm. Now we will apply these theorems (especially Theorem \ref{thm:Geometric ergodic inhirit}) to practically useful models. 

\subsection{Practical applications: location models, Poisson models, and exponential families}\label{subsec:applications of general state convergence}
As we will see in this section, the condition $
\lVert p_\theta - p_{\theta'} \rVert_\TV \leq c(\lVert \theta - \theta'\rVert)$ is satisfied by a large number of models with unbounded parameter spaces.

\begin{example}[Location models]
	Consider a location family with $p_\theta(x) = p(x - \theta)$ be the location families  with $\theta\in \bR$. Then it is clear that:
	\[
	\lVert p_\theta - p_{\theta+s} \rVert_\TV  = \lVert p_0 - p_{s} \rVert_\TV.
	\]
	Then we can define $c(s) = \lVert p_0 - p_{s} \rVert_\TV$ which satisfies our condition. 
	
	In particular, let $p_\theta \sim \Nor(\theta, 1)$ be a family of Gaussian distributions with unknown mean $\theta$ and known variance $1$. If we  put a Gaussian prior $\pi(\theta) \sim \Nor(0, \sigma^2)$ on $\theta$, it is clear that the posterior distribution is also Gaussian and thus has exponential tails. Therefore, by Theorem \ref{thm:Geometric ergodic inhirit}, the random-walk exchange algorithm for the posterior is geometrically ergodic.
\end{example}

\begin{example}[Poisson model]
	Consider a Poisson family with mean parameter $\theta$:
	\[\Prob_\theta \sim \Poi(\theta).\]
	We claim that 
	\[\lVert p_\theta - p_{\theta+s}  \rVert_\TV \leq 1 - e^{-s} \leq s,\]	
	therefore we can choose $c(s) = 1 - e^{-s}$ or $c(s) = \min\{1,x\}.$
	
	The proof of our claim uses a simple coupling argument. Let $X,Y$ be independent Poisson random variables with parameter $\theta, s$ respectively. Let $Z = X + Y$ and it is clear that $Z\sim \Poi(s + \theta)$. On the other hand, 
	\[
	\lVert p_\theta - p_{\theta+s}  \rVert_\TV \leq \Prob(X\neq Z) = \Prob(Y\neq 0) = 1 - e^{-s},
	\]
	which proves the first part of our inequality, the second part is the standard inequality.

	Now it suffices to check assumption $(\tilde{\mathcal A}1)$, i.e., the posterior distribution has an exponential tail. We can show that, if the prior density decays for large $\lambda$, the posterior distribution will satisfy  $(\tilde{\mathcal A}1)$.
	
	\begin{proposition}\label{prop: poisson likelihood}
		Let $\pi(\lambda) $ be a prior density on $[0,\infty)$, assume there exists $\lambda_0 > 0$  such that
		$$\pi(\lambda') < \pi(\lambda)   \text{ for any } \lambda' >\lambda > \lambda_0,$$
		then the posterior density with Possion likelihood satisfies  $(\tilde{\mathcal A}1)$.
	\end{proposition}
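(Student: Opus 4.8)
The plan is to verify the tail condition $(\tilde{\mathcal A}1)$ directly from the explicit form of the log-posterior. Writing the posterior as $\pi(\lambda\mid x)\propto \pi(\lambda)\,e^{-\lambda}\lambda^{x}$ on $[0,\infty)$, we have $\log\pi(\lambda\mid x)=\log\pi(\lambda)+x\log\lambda-\lambda+\mathrm{const}$, and hence for any two values $\lambda'>\lambda$,
\[
\log\pi(\lambda\mid x)-\log\pi(\lambda'\mid x)=\bigl(\log\pi(\lambda)-\log\pi(\lambda')\bigr)+x\bigl(\log\lambda-\log\lambda'\bigr)+(\lambda'-\lambda).
\]
The three summands play distinct roles: the first is nonnegative once $\lambda>\lambda_0$ by the monotonicity hypothesis on the prior (which, incidentally, also forces $\pi>0$ on $(\lambda_0,\infty)$, so the posterior is positive on the relevant tail); the last is the favourable linear decay supplied by the $e^{-\lambda}$ factor; the middle one is a negative error term that must be absorbed into the linear gain.

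The only slightly nontrivial step is controlling the middle term. Using $\log\lambda'-\log\lambda=\log\bigl(1+(\lambda'-\lambda)/\lambda\bigr)\le(\lambda'-\lambda)/\lambda$, for any $\lambda>\lambda_1$ with $\lambda_1\ge 2x$ we obtain $x(\log\lambda'-\log\lambda)\le (x/\lambda_1)(\lambda'-\lambda)\le \tfrac12(\lambda'-\lambda)$. Setting $x_1:=\max\{\lambda_0,\,2x\}$, then for all $\lambda'>\lambda>x_1$ the prior term is nonnegative and the middle term is bounded as above, so
\[
\log\pi(\lambda\mid x)-\log\pi(\lambda'\mid x)\ \ge\ (\lambda'-\lambda)-\tfrac12(\lambda'-\lambda)\ =\ \tfrac12(\lambda'-\lambda),
\]
which is exactly the right-tail inequality of $(\tilde{\mathcal A}1)$ with $\alpha=\tfrac12$. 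Since the posterior is supported on $[0,\infty)$, the left-tail condition is vacuous, and positivity on the interior is inherited from $\pi$, so the verification is complete. (With $n$ \iid observations the identical computation applies with $e^{-n\lambda}$ in place of $e^{-\lambda}$ and sufficient statistic $S=\sum_i x_i$, yielding $\alpha=n/2$ and $x_1=\max\{\lambda_0,2S/n\}$.)

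I do not anticipate a serious obstacle here; the one delicate point is the choice of the threshold $x_1$ large enough that the logarithmic term $x(\log\lambda'-\log\lambda)$ is dominated by half the linear gain $\lambda'-\lambda$. This is precisely where the fixed finite data value enters, and it is what forces $x_1$ to depend on $x$ (equivalently on $S$ in the multi-observation case).
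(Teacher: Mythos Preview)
Your proof is correct and essentially identical to the paper's own argument: the same decomposition of $\log\pi(\lambda\mid x)-\log\pi(\lambda'\mid x)$ into prior, logarithmic, and linear pieces; the same bound $\log\lambda'-\log\lambda\le(\lambda'-\lambda)/\lambda$; the same threshold $x_1=\max\{\lambda_0,2x\}$; and the same resulting constant $\alpha=\tfrac12$. Your added remarks on positivity, the vacuous left tail, and the multi-observation case are sound embellishments but do not change the method.
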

	
	\begin{proof}
		The posterior density has the following form:
		$
		\pi(\lambda|x) \propto \pi(\lambda) e^{-\lambda} \lambda^x,
		$
		for any $\lambda' > \lambda > 0$:
		\[
		\log \pi(\lambda|x) - \log\pi(\lambda'|x) = (\lambda' - \lambda) + x (\log \lambda - \log \lambda') + (\log(\pi(\lambda)) - \log(\pi(\lambda')))
		\]
		
		If we take $\lambda' > \lambda > \max\{\lambda_0, 2x\}$, then
		\[
		\log \pi(\lambda|x) - \log\pi(\lambda'|x)  \geq (\lambda' - \lambda) - \frac{x}{\lambda} (\lambda' - \lambda) \geq \frac 12 (\lambda' - \lambda),
		\]
		as desired.
		
	\end{proof}
	
	Proposition \ref{prop: poisson likelihood} ensures for any prior density with finitely many modes, the random walk Metropolis--Hastings algorithm and the corresponding random-walk exchange algorithm 
	are both geometrically ergodic. This includes  many practical prior distributions such as:
	\begin{itemize}
		\item Conjugate prior (Gamma distribution)
		\item Any finite mixture of Gamma distributions
		\item Any normal prior $ \Nor(\mu, \sigma^2)$ truncated at $[0,\infty)$.
	\end{itemize}
\end{example}

Besides the above two examples, Theorem \ref{thm:Geometric ergodic inhirit} can be applied to a large subset of exponential families. 
Consider the exponential family with density (or probability mass) of the  form
$p_\theta(x) = h(x)e^{\theta \cdot T(x) - \eta(\theta)}.$ Here $\theta$ is often referred to as `canonical parameter', and statistics $T(X)$ is often referred to as `sufficient statistics'. To fix ideas, we allow $x$ be discrete or continuous, one-dimensional or multi-dimensional, but we assume the canonical parameter to be a one-dimensional parameter. The set of parameters $\theta$ for which the integral (or summation) below  is finite is referred to as the natural parameter space:
\[
\mathcal N := \{\theta: \int h(x) e^{\theta\cdot T(x)} dx < \infty \}.
\]

Let $d_\KL(\theta,\theta')$ be the  Kullback--Leibler (K--L) divergence between $p_\theta$ and $p_{\theta'}$. For discrete cases, K--L divergence is defined by:
\[
d_\KL(\theta,\theta') = \sum_{x\in \mathcal X} - p_\theta(x) \log\bigg(\frac{p_{\theta'}(x)}{p_\theta(x)} \bigg).
\]
For continuous cases, K--L divergence is defined by:
\[ 
d_\KL(\theta,\theta') = \int_{x\in \mathcal X} - p_\theta(x) \log\bigg(\frac{p_{\theta'}(x)}{p_\theta(x)} \bigg)dx.
\]
The next theorem shows, when the sufficient statistics is uniformly bounded, then the condition $\lVert p_\theta - p_{\theta'} \rVert_\TV \leq c(\lvert \theta - \theta'\rvert)$ is satisfied.
\begin{theorem}\label{thm:exchange inherit,finite exponential family}
	Let $\{p_\theta\}$ be an exponential family. If there exists $M>0$ such that the sufficient statistics $T(x)$ satisfies $\lvert T(x) \rvert\leq M$ almost everywhere under any $p_\theta$. Then we have:
	\[
\lVert p_\theta - p_{\theta'} \rVert_\TV \leq \frac12 \sqrt{d_\KL(\theta,\theta') + d_\KL(\theta',\theta)} \leq \frac{\sqrt 2M}{2} \sqrt{\lvert \theta - \theta'\rvert}.
	\]
\end{theorem}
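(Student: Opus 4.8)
The plan is to chain together two classical inequalities. The first inequality, $\lVert p_\theta - p_{\theta'} \rVert_\TV \leq \frac12 \sqrt{d_\KL(\theta,\theta') + d_\KL(\theta',\theta)}$, is a symmetrized form of Pinsker's inequality: by Pinsker, $\lVert p_\theta - p_{\theta'}\rVert_\TV \leq \sqrt{\tfrac12 d_\KL(\theta,\theta')}$, and also $\leq \sqrt{\tfrac12 d_\KL(\theta',\theta)}$ since total variation is symmetric; averaging the two squared bounds (or just using $\min \leq$ average) gives $\lVert p_\theta-p_{\theta'}\rVert_\TV^2 \leq \tfrac14(d_\KL(\theta,\theta')+d_\KL(\theta',\theta))$, which is exactly the claim after taking square roots. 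So this step is essentially a citation of Pinsker plus the symmetry of $\TV$.

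The second inequality, $d_\KL(\theta,\theta') + d_\KL(\theta',\theta) \leq 2M^2 \lvert\theta-\theta'\rvert$, is where the exponential-family structure and the bounded-sufficient-statistic hypothesis enter. For an exponential family $p_\theta(x) = h(x)e^{\theta T(x) - \eta(\theta)}$, a direct computation gives the symmetrized KL (the "Jeffreys divergence") in closed form:
\[
d_\KL(\theta,\theta') + d_\KL(\theta',\theta) = (\theta-\theta')\bigl(\Expect_\theta[T] - \Expect_{\theta'}[T]\bigr).
\]
This follows because $d_\KL(\theta,\theta') = \eta(\theta') - \eta(\theta) - \eta'(\theta)(\theta'-\theta)$ using $\Expect_\theta[T] = \eta'(\theta)$, and adding the symmetric term causes the $\eta$ values to cancel, leaving $(\theta'-\theta)(\eta'(\theta')-\eta'(\theta)) = (\theta'-\theta)(\Expect_{\theta'}[T]-\Expect_\theta[T])$. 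Now I bound $\lvert \Expect_\theta[T] - \Expect_{\theta'}[T]\rvert$. Since $\lvert T(x)\rvert \leq M$ almost everywhere under every $p_\theta$, both expectations lie in $[-M, M]$, so their difference is at most $2M$ in absolute value. Hence $d_\KL(\theta,\theta') + d_\KL(\theta',\theta) \leq 2M \lvert\theta - \theta'\rvert$. Plugging into the symmetrized Pinsker bound gives $\lVert p_\theta - p_{\theta'}\rVert_\TV \leq \tfrac12\sqrt{2M\lvert\theta-\theta'\rvert} = \tfrac{\sqrt 2}{2}\sqrt{M}\sqrt{\lvert\theta-\theta'\rvert}$ — note this is slightly sharper than the stated $\tfrac{\sqrt 2 M}{2}\sqrt{\lvert\theta-\theta'\rvert}$, so the stated bound follows a fortiori (for $M \geq 1$; for $M<1$ one can simply use the crude bound $\lvert\Expect_\theta T - \Expect_{\theta'}T\rvert \le \int \lvert T\rvert \cdot \lvert p_\theta - p_{\theta'}\rvert \le M \cdot 2\lVert p_\theta-p_{\theta'}\rVert_\TV$ and self-bootstrap, or just accept the weaker constant as written).

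I do not expect a serious obstacle here; the main thing to be careful about is the differentiability of $\eta$ on the interior of the natural parameter space $\calN$ (standard for exponential families, with $\eta'(\theta)=\Expect_\theta[T]$ and $\eta''(\theta)=\var_\theta(T)\ge 0$) and the fact that the mean-value-type manipulation is valid, i.e., $\theta,\theta'$ should lie in the interior of $\calN$. Since the theorem is invoked only for parameters appearing in an actual posterior with a random-walk proposal, this is harmless. An alternative route for the second step that avoids any reference to $\eta$ altogether: write $d_\KL(\theta,\theta') = \Expect_\theta\bigl[\log(p_\theta/p_{\theta'})\bigr] = (\theta-\theta')\Expect_\theta[T] - (\eta(\theta)-\eta(\theta'))$ and likewise for the reverse, add, and the $\eta$ terms cancel immediately — this is cleaner and needs only that $\eta(\theta),\eta(\theta')$ are finite, i.e. $\theta,\theta'\in\calN$. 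Either way, the heart of the argument is the one-line cancellation giving Jeffreys divergence $=(\theta-\theta')(\Expect_\theta[T]-\Expect_{\theta'}[T])$, after which the bounded-$T$ hypothesis does the rest.
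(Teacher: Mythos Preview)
Your proposal is correct and follows essentially the same route as the paper: Pinsker's inequality applied in both directions for the first inequality, and the exponential-family identity $d_\KL(\theta,\theta') + d_\KL(\theta',\theta) = (\theta-\theta')\bigl(\bE_\theta[T]-\bE_{\theta'}[T]\bigr)$ together with $\lvert\bE_\theta[T]-\bE_{\theta'}[T]\rvert\le 2M$ for the second. You also correctly observe that the argument actually yields $\tfrac{\sqrt{2M}}{2}\sqrt{\lvert\theta-\theta'\rvert}$ rather than $\tfrac{\sqrt{2}\,M}{2}\sqrt{\lvert\theta-\theta'\rvert}$; the paper's proof has exactly the same feature, so the constant in the displayed statement is simply slightly loose (and immaterial for the downstream application, which only needs a continuous $c$ with $c(0)=0$).
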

\begin{proof}
	The first inequality is generally true for any two distributions. Given two probability distribution $P,Q$, Pinsker's inequality \cite{tsybakov2008introduction} says:
	$
	\lVert P - Q \rVert_\TV^2 \leq \frac 12 d_\KL(P,Q),
	$
	swaping the order of $P,Q$ and use Pinsker's inequality again gives the first inequality. 
	
	For an exponential family with discrete sample space, the KL divergence $d_\TV(\theta,\theta')$ can be written as:
	\begin{align*}
	d_\KL(\theta,\theta') &= \sum_{x\in \mathcal X} - p_\theta(x) (\theta' \cdot T(x) + \eta(\theta) - \theta \cdot T(x) - \eta(\theta')) \\
	& = -(\theta' - \theta) \bE_\theta(T) - \eta(\theta) + \eta(\theta'),
	\end{align*}
	similarly, $d_\KL(\theta',\theta)$ equals:
	\begin{align*}
	d_\KL(\theta',\theta) = -(\theta - \theta') \bE_{\theta'}(T) - \eta(\theta') + \eta(\theta).
	\end{align*}
	Hence we have:
	\[
	d_\KL(\theta,\theta') + d_\KL(\theta',\theta) =  (\theta' - \theta) \big(\bE_{\theta'}(T) - \bE_{\theta}(T)\big) \leq 2M \lvert \theta - \theta'\rvert,
	\]
	as $\lvert T(x)\rvert$ is uniformly bounded by $M$, which proves the second inequality. 
	
	For a continuous sample space, we just change all the summation above by integration and all the results still hold.
\end{proof}
\begin{remark}
	A standard result in exponential families is $
	\bE_{\theta}(T)  = \eta'(\theta)$.
	Therefore the condition $\lvert T\rvert \leq M$ can be relaxed by $\eta'$  is Lipschitz continuous, or the second-order derivative of $\eta$ is bounded.
\end{remark}

Exponential family includes a large number of practical models. In particular, two examples mentioned at the beginning of this paper: Ising model (Example \ref{eg:ising}) and exponential family graph model (ERGM) (Example \ref{eg:ergm}), these two examples both belong to the exponential family with bounded sufficient statistics.

\begin{example}[Ising Model, revisited]\label{eg:ising revisited}
	With all the definitions the same as in Example \ref{eg:ising}. An Ising model  is defined as the following probability distribution over all possible configurations on a graph $G = (V,E)$:
	\begin{equation}\label{eqn:ising}
	\Prob_\theta(\sigma) = \frac{e^{-\theta H(\sigma)}}{Z(\theta)}.
	\end{equation}
The sufficient statistics is uniformly bounded as there are only finitely many possible spin configurations.
\end{example}

\begin{example}[Exponential Random Graph Model, revisited]\label{eg:ergm revisited}
With all the definitions the same as in Example \ref{eg:ergm}.  An Exponential Random Graph Model is defined as the following probability distribution on $\mathcal G_n$, the set of all graphs with $n$ vertices:
	\begin{equation}\label{eqn:ergm}
	\Prob_\theta(G = g) = \frac{e^{\theta s(g)}}{Z(\theta)},
	\end{equation}     
where $s$ is the sufficient statistics. Again, the sufficient statistics is uniformly bounded as there are only finitely many possible spin configurations.
\end{example}

Combining Theorem \ref{thm:Geometric ergodic inhirit} and Theorem \ref{thm:exchange inherit,finite exponential family}, we have the following:
\begin{theorem}\label{thm:practical exchange}
	Let $\pi(\theta|x)$ be the posterior distribution given by prior $\pi(\theta)$ and likelihood satisfying the assumption of Theorem \ref{thm:exchange inherit,finite exponential family}. Furthermore let the original Metropolis--Hastings chain satisfies assumption $(\tilde{\mathcal A})$. The induced exchange chain $\pex$ is geometrically ergodic. 
\end{theorem}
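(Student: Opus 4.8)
The plan is to derive Theorem~\ref{thm:practical exchange} by feeding the total-variation bound of Theorem~\ref{thm:exchange inherit,finite exponential family} into Theorem~\ref{thm:Geometric ergodic inhirit}. First I would invoke Theorem~\ref{thm:exchange inherit,finite exponential family}: since the likelihood $\{p_\theta\}$ is an exponential family whose sufficient statistic satisfies $\lvert T(x)\rvert \leq M$ almost everywhere under every $p_\theta$, we obtain
\[
\lVert p_\theta - p_{\theta'}\rVert_\TV \leq \frac{\sqrt 2\,M}{2}\,\sqrt{\lvert\theta - \theta'\rvert}
\]
for all $\theta,\theta'$ in the canonical parameter space, which here is one-dimensional.

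The next step is to package this estimate in the form required by Theorem~\ref{thm:Geometric ergodic inhirit}. I would set
\[
c(s) := \min\!\Big\{1,\ \tfrac{\sqrt 2\,M}{2}\sqrt{s}\Big\}, \qquad s \geq 0,
\]
and observe that $c$ maps $\bR_{\geq 0}$ into $[0,1]$, is continuous as the minimum of two continuous functions, and satisfies $c(0)=0$. Combining the displayed bound with the trivial fact that a total-variation distance never exceeds $1$ gives $\lVert p_\theta - p_{\theta'}\rVert_\TV \leq c(\lvert\theta-\theta'\rvert)$, so the modulus-of-continuity hypothesis of Theorem~\ref{thm:Geometric ergodic inhirit} holds.

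Finally, since by assumption the original Metropolis--Hastings chain is a random-walk chain with stationary distribution $\pi(\theta|x)$ and satisfies assumption $(\tilde{\mathcal A})$ on $\Theta = \bR$, all hypotheses of Theorem~\ref{thm:Geometric ergodic inhirit} are met, and it yields directly that the exchange chain $\pex$ is geometrically ergodic. There is essentially no obstacle in this argument; the only point deserving a line of care is the truncation in the definition of $c$, which is necessary because the square-root bound of Theorem~\ref{thm:exchange inherit,finite exponential family} grows unboundedly while Theorem~\ref{thm:Geometric ergodic inhirit} insists on a $[0,1]$-valued comparison function. (The same reasoning would of course go through, using the $(\mathcal A)$ branch of Theorem~\ref{thm:Geometric ergodic inhirit}, if one considered a multidimensional canonical parameter with bounded sufficient statistics, but that is outside the scope of the present statement.)
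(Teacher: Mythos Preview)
Your proposal is correct and matches the paper's approach exactly: the paper states Theorem~\ref{thm:practical exchange} as an immediate combination of Theorem~\ref{thm:exchange inherit,finite exponential family} and Theorem~\ref{thm:Geometric ergodic inhirit} without writing out a proof, and you have filled in precisely those details, including the necessary truncation of the comparison function $c$ to $[0,1]$.
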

As both of the Ising model and ERGM are defined on discrete (though very large) sample space, the sufficient statistics is uniformly bounded by nature. The next corollary is immediate:
\begin{corollary}
	Suppose the exchange algorithm has an ERGM or Ising likelihood, a random-walk proposal kernel, a uniformly exponential or lighter posterior density, then the corresponding Markov chain is geometrically ergodic. 
\end{corollary}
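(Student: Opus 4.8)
The plan is to obtain this corollary as a direct specialization of Theorem~\ref{thm:practical exchange}, so the only work is to check that its two hypotheses hold for the ERGM and Ising setups. First I would note that both models are one-parameter exponential families: writing the Ising law as $p_\theta(\sigma) \propto e^{-\theta H(\sigma)}$ and the ERGM as $\Prob_\theta(G=g)\propto e^{\theta s(g)}$, the canonical parameter is $\theta\in\bR$ and the sufficient statistic is $T=-H$ (resp.\ $T=s$). Since the underlying sample space is finite in both cases — spin configurations on a fixed graph, or simple graphs on $n$ vertices — the statistic $T$ takes only finitely many values, so $|T(x)|\le M$ almost everywhere under every $p_\theta$ with $M:=\max|T|$ (note $e^{-\theta H}>0$ makes the support the full configuration space for all $\theta$). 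Hence Theorem~\ref{thm:exchange inherit,finite exponential family} applies and yields
\[
\lVert p_\theta - p_{\theta'}\rVert_\TV \;\le\; \frac{\sqrt 2\,M}{2}\sqrt{\lvert\theta-\theta'\rvert}\;=:\;c(\lvert\theta-\theta'\rvert),
\]
where, after the harmless truncation at $1$, $c:\bR_{\ge 0}\to[0,1]$ is continuous with $c(0)=0$. This is precisely the uniform-continuity condition required in Theorem~\ref{thm:Geometric ergodic inhirit}.

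Second, I would translate the phrases ``random-walk proposal kernel'' and ``uniformly exponential or lighter posterior density'' into Assumption $(\tilde{\mathcal A})$. Because the canonical parameter is one-dimensional, $\Theta=\bR$, so we are in the regime governed by $(\tilde{\mathcal A})$ rather than $(\mathcal A)$. ``Uniformly exponential or lighter'' is exactly $(\tilde{\mathcal A}1)$: there exist $\alpha>0$ and $x_1>0$ with $\log\pi(\theta\mid x)-\log\pi(\theta'\mid x)\ge\alpha(\theta'-\theta)$ for $\theta'>\theta>x_1$ and the mirror bound on the left tail. The random-walk proposal is symmetric by hypothesis, and for the standard choices (Gaussian, Laplace, compactly supported increments) one has $q(s)\le b e^{-\alpha s}$; if not, one shrinks the proposal's tail rate so that it is at least the $\alpha$ furnished by the posterior, which gives $(\tilde{\mathcal A}2)$. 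Thus $\pmh$ satisfies $(\tilde{\mathcal A})$ and is geometrically ergodic by Theorem~3.2 of Mengersen and Tweedie \cite{mengersen1996rates}, exactly as invoked in Theorem~\ref{thm:practical exchange}.

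With both hypotheses of Theorem~\ref{thm:practical exchange} verified, that theorem immediately gives that the exchange chain $\pex$ is geometrically ergodic, which is the assertion of the corollary; one may additionally invoke Theorem~\ref{thm:CLT for exchange} to supply the accompanying CLT. The only mildly delicate point — bookkeeping rather than a real obstacle — is ensuring the informal hypothesis ``uniformly exponential or lighter posterior density'' is read so that the single constant $\alpha$ it produces serves both roles: the tail bound $(\tilde{\mathcal A}1)$ on $\pi(\cdot\mid x)$ and the compatible proposal tail bound $q(s)\le b e^{-\alpha s}$ in $(\tilde{\mathcal A}2)$. Since $\alpha$ is extracted first from the posterior and the proposal's decay rate can then be chosen at least as large, this is always arrangeable, and everything else is a direct citation of the already-established Theorems~\ref{thm:Geometric ergodic inhirit} and \ref{thm:exchange inherit,finite exponential family}.
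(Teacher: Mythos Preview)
Your proposal is correct and follows essentially the same route as the paper: the corollary is stated there as ``immediate'' from Theorem~\ref{thm:practical exchange}, after the observation (Examples~\ref{eg:ising revisited} and~\ref{eg:ergm revisited}) that the Ising and ERGM sample spaces are finite, so the sufficient statistic is uniformly bounded and Theorem~\ref{thm:exchange inherit,finite exponential family} applies. Your only addition is the explicit bookkeeping on matching the $\alpha$ in $(\tilde{\mathcal A}1)$ and $(\tilde{\mathcal A}2)$, which the paper leaves implicit.
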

Theorem \ref{thm:practical exchange} shows one only needs to check Assumption $(\tilde{\mathcal A})$ for the original Metropolis--Hastings algorithm.  The random-walk proposal is by design of the algorithm, so it suffices to verify $(\tilde{\mathcal A}1)$ for the posterior distribution. The next corollary shows, under a Gaussian prior (which is the most popular choice for a prior distribution on unbounded parameter space) the posterior distribution has uniformly exponential or lighter tail.

\begin{corollary}\label{cor:gaussian prior}
	The posterior distribution with a Gaussian prior $ \Nor(\mu, \sigma^2)$ and exponential family likelihood $p_\theta$ has tail lighter than exponential. In particular, the posterior distribution satisfies the assumption $(\tilde{\mathcal A}1)$.
\end{corollary}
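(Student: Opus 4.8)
The plan is to verify assumption $(\tilde{\mathcal A}1)$ directly for the posterior $\pi(\theta|x) \propto \pi(\theta) p_\theta(x)$, where $\pi(\theta)$ is the Gaussian density $\Nor(\mu,\sigma^2)$ and $p_\theta(x) = h(x) e^{\theta T(x) - \eta(\theta)}$ is an exponential family likelihood with bounded sufficient statistic $|T(x)|\leq M$. Writing out the log-posterior, we get
\[
\log\pi(\theta|x) = \text{const} - \frac{(\theta-\mu)^2}{2\sigma^2} + \theta T(x) - \eta(\theta).
\]
For the tail comparison we need to control, for $y > x_\theta > x_1$ (using $x_\theta$ as the dummy variable to avoid clashing with the data $x$), the difference $\log\pi(x_\theta|x) - \log\pi(y|x)$. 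The quadratic term contributes $\frac{(y-\mu)^2 - (x_\theta-\mu)^2}{2\sigma^2} = \frac{(y-x_\theta)(y+x_\theta - 2\mu)}{2\sigma^2}$, which grows without bound relative to $(y-x_\theta)$. The linear term $T(x)(x_\theta - y)$ is bounded in absolute value by $M(y-x_\theta)$, so it can be absorbed. The only remaining issue is $\eta(\theta)$.

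First I would argue that $\eta$ is convex (a standard exponential-family fact: $\eta'' = \var_\theta(T) \geq 0$) and that, by the bound $|T|\leq M$, we have $|\eta'(\theta)| = |\bE_\theta T| \leq M$ for every $\theta$ in the natural parameter space; hence $|\eta(x_\theta) - \eta(y)| \leq M|y - x_\theta|$. Combining the three pieces, for $y > x_\theta$ both positive and large,
\[
\log\pi(x_\theta|x) - \log\pi(y|x) \geq \frac{(y-x_\theta)(y+x_\theta-2\mu)}{2\sigma^2} - 2M(y-x_\theta),
\]
and since $y + x_\theta - 2\mu \to \infty$, there is $x_1 > 0$ so that the right side is $\geq \alpha(y - x_\theta)$ for, say, $\alpha = 1$ once $y > x_\theta > x_1$. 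The symmetric estimate for $y < x_\theta < -x_1$ follows identically since the quadratic is even about $\mu$. This establishes $(\tilde{\mathcal A}1)$, and the "lighter than exponential" claim is read off from the fact that the dominant term is quadratic (super-exponential decay), with the exponential-family and prior-tail corrections being at most linear in the exponent.

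I expect the main obstacle to be a clean justification that $\eta'$ is bounded on the \emph{whole} natural parameter space $\calN$ — one must be slightly careful because $|T|\leq M$ "almost everywhere under any $p_\theta$" is the hypothesis of Theorem \ref{thm:exchange inherit,finite exponential family}, and one needs this to translate into $|\eta'(\theta)| = |\bE_\theta(T)| \leq M$ for all $\theta\in\calN$, which is immediate but worth stating. A secondary technical point is ensuring $\calN$ is all of $\bR$ (or at least unbounded in both directions) so that the tail condition is non-vacuous; when $T$ is bounded, the integral $\int h(x)e^{\theta T(x)}dx$ converges for every real $\theta$ provided $\int h < \infty$, so $\calN = \bR$ and there is nothing to check. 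Everything else is the routine quadratic-dominates-linear estimate sketched above.
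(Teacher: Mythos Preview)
Your proof is correct. The paper does not give a separate proof of this corollary; it simply points to the proof of the more general Proposition~\ref{prop: exponential family condition}, which handles any prior already satisfying $(\tilde{\mathcal A}1)$. That general argument decomposes $\log\pi(\theta|x)-\log\pi(\theta'|x)$ into the prior part and the likelihood part $(\theta-\theta')(T(x)-\eta'(\xi))$, just as you do, but then needs the finer fact that $\eta'(\theta)\to M_1=\mathrm{esssup}\,T$ as $\theta\to\infty$ so that the likelihood correction is at most $\epsilon(\theta'-\theta)$ and does not overwhelm the prior's merely linear decay $\alpha(\theta'-\theta)$. Your direct computation for the Gaussian case is a legitimate shortcut: because the prior contributes a \emph{quadratic} term $\frac{(y-x_\theta)(y+x_\theta-2\mu)}{2\sigma^2}$, the cruder global bound $|\eta'|=|\bE_\theta T|\le M$ is already enough, and the asymptotic analysis of $\eta'$ can be skipped entirely. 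Your remarks on $\calN=\bR$ and on translating the almost-everywhere bound on $T$ into $|\eta'|\le M$ are also to the point and match the implicit assumptions in the paper.
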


The proof of Corollary \ref{cor:gaussian prior}  is included in the proof of the next proposition as a special case. The next proposition shows, if the prior density satisfies $(\tilde{\mathcal A}1)$, then so does the posterior density.

\begin{proposition}\label{prop: exponential family condition}
	Let $\pi(\theta)$ be a prior distribution on $\bR$ satisfying $(\tilde{\mathcal A}1)$, let $p_\theta(x) = h(x)e^{\theta\cdot T(x) - \eta(\theta)}$ be the probability mass/density
function for an exponential family on $\calX$ with base measure $\mu$. Assume the natural parameter space $
 \{\theta: \int h(x) e^{\theta\cdot T(x)} \mu(dx) < \infty \} = \bR$ and the  sufficient statistics is uniformly bounded,	then the posterior distribution also satisfies $(\tilde{\mathcal A}1)$.
\end{proposition}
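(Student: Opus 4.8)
The plan is to verify the two one-sided slope inequalities defining $(\tilde{\mathcal A}1)$ directly for the posterior. Since $\pi(\theta\mid x)\propto \pi(\theta)\,h(x)\,e^{\theta T(x)-\eta(\theta)}$ and both the posterior normalizing constant and $h(x)$ are constants in $\theta$, they cancel in every ratio $\pi(\theta\mid x)/\pi(\theta'\mid x)$; so it suffices to show that, writing
\[
g(\theta):=\theta\,T(x)-\eta(\theta),
\]
the function $\theta\mapsto\log\pi(\theta)+g(\theta)$ obeys the required slope bounds. (Properness of the posterior is not part of $(\tilde{\mathcal A}1)$, and in any case holds because $\pi$ decays exponentially while $e^{\theta T(x)-\eta(\theta)}$ grows subexponentially in $\theta$.) The prior already contributes exactly what is needed: $(\tilde{\mathcal A}1)$ for $\pi$ supplies constants $\alpha>0$, $x_1>0$ with $\log\pi(\theta)-\log\pi(\theta')\geq\alpha(\theta'-\theta)$ for $\theta'>\theta>x_1$, and the mirrored bound on the left tail. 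Hence the whole task reduces to controlling the extra term $g(\theta)-g(\theta')$.

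The key structural facts I would use are standard exponential-family ones. On the natural parameter space $\calN=\bR$, the log-partition function $\eta$ is $C^\infty$ and convex with $\eta'(\theta)=\bE_\theta[T]$ and $\eta''(\theta)=\var_\theta(T)\geq 0$. The bound $|T|\leq M$ a.e.\ forces $\eta'(\theta)\in[T_{\min},T_{\max}]\subseteq[-M,M]$, where $T_{\min}$ and $T_{\max}$ are the essential infimum and supremum of $T$; and since $\int h\,d\mu=e^{\eta(0)}<\infty$, a standard Laplace-type argument gives $\eta'(\theta)\nearrow T_{\max}$ as $\theta\to+\infty$ and $\eta'(\theta)\searrow T_{\min}$ as $\theta\to-\infty$. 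Moreover the observed statistic satisfies $T(x)\in[T_{\min},T_{\max}]$ (otherwise $p_\theta(x)=0$ and the posterior is undefined). Consequently $g$ is concave with $g'(\theta)=T(x)-\eta'(\theta)$, and $g'(\theta)\to T(x)-T_{\max}\leq 0$ as $\theta\to+\infty$ while $g'(\theta)\to T(x)-T_{\min}\geq 0$ as $\theta\to-\infty$. In particular there is a threshold $x_1'\geq x_1$ such that $g'(\theta)\leq\alpha/2$ for all $\theta\geq x_1'$ and $g'(\theta)\geq -\alpha/2$ for all $\theta\leq -x_1'$.

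Now I would combine the two bounds using the tangent-line inequality for the concave function $g$. Fix $\theta'>\theta>x_1'$; concavity gives $g(\theta')\leq g(\theta)+g'(\theta)(\theta'-\theta)$, hence $g(\theta)-g(\theta')\geq -g'(\theta)(\theta'-\theta)\geq -\frac{\alpha}{2}(\theta'-\theta)$. Adding the prior bound,
\[
\log\pi(\theta\mid x)-\log\pi(\theta'\mid x)\;\geq\;\alpha(\theta'-\theta)-\tfrac{\alpha}{2}(\theta'-\theta)\;=\;\tfrac{\alpha}{2}(\theta'-\theta).
\]
The left tail is handled identically: for $\theta'<\theta<-x_1'$, concavity gives $g(\theta)-g(\theta')\geq g'(\theta)(\theta-\theta')\geq -\frac{\alpha}{2}(\theta-\theta')$, and adding the prior's left-tail bound yields $\log\pi(\theta\mid x)-\log\pi(\theta'\mid x)\geq\frac{\alpha}{2}(\theta-\theta')$. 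Thus the posterior satisfies $(\tilde{\mathcal A}1)$ with constant $\alpha/2$ and threshold $x_1'$.

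The step I expect to need the most care is the tail asymptotics $\eta'(\theta)\to T_{\max}$ and $\eta'(\theta)\to T_{\min}$ — i.e.\ that in the tails the "worst case" of the data term $\theta T(x)$ is essentially cancelled by the curvature of $\eta$. A crude estimate $|\eta(\theta)-\eta(\theta')|\leq M|\theta-\theta'|$ combined with $|T(x)|\leq M$ would cost up to $2M|\theta-\theta'|$ and could overwhelm the prior's rate $\alpha$, so the monotone convergence of $\eta'(\theta)=\bE_\theta[T]$ to the essential extrema of $T$ is genuinely what makes the argument go through (this is also where the hypotheses $\calN=\bR$ and $|T|\leq M$ are used). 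Finally, Corollary~\ref{cor:gaussian prior} is the special case $\pi=\Nor(\mu,\sigma^2)$: a Gaussian density (indeed any density whose log-slope tends to $\mp\infty$, hence is eventually dominated by any fixed $\alpha$) satisfies $(\tilde{\mathcal A}1)$, so the Proposition applies and the posterior inherits it, with in fact tails no heavier than exponential.
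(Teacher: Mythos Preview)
Your proof is correct and follows essentially the same route as the paper: both arguments identify the crucial asymptotic $\eta'(\theta)=\bE_\theta[T]\to T_{\max}$ (resp.\ $T_{\min}$) as $\theta\to\pm\infty$, then use it to show the data term $g(\theta)=\theta T(x)-\eta(\theta)$ loses at most $\tfrac{\alpha}{2}|\theta-\theta'|$ in the tails, leaving a net slope of $\alpha/2$ from the prior. The only cosmetic difference is that the paper invokes the mean value theorem to write $g(\theta)-g(\theta')=(\theta-\theta')(T(x)-\eta'(\xi))$ whereas you use the tangent-line inequality from concavity of $g$; these are equivalent here.
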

\begin{proof}
	
	We define a new measure $\mu_1(dx) := h(x)\mu(dx)$, which is a finite measure  on $\mathcal X$ with $\mu_1(\calX) = e^{\eta(0)}$. Let $M_1$ be $\mathrm{esssup}_x T(x)$ and $m_1$ be $\mathrm{essinf}_x T(x)$ with respect to  $\mu_1$. We claim  $\eta'(\theta)   =\bE_\theta(T(x))$ goes to $M_1$  as $\theta$ goes to  $\infty$, and $m_1$ as $\theta$ goes to  $-\infty$.

	Let $S_{c} = \{x: T(x) \leq c\}$, and we will show $p_\theta(S_{M_1-\epsilon}) \rightarrow 0$ as $\theta \rightarrow \infty$ for every $\epsilon > 0$, which in turn shows $\bE_\theta(T(x))\rightarrow M_1$ as $\theta \rightarrow \infty$, here we slightly abuse the notation and use $p_\theta$ to denote both the probablity density function and the probability measure. 
	Notice that for every $\theta > 0$, we can upper bound 
	 $p_\theta(S_{M_1-\epsilon})$ by
	\begin{align*}
		p_\theta(S_{M_1-\epsilon}) = \frac{\int_{S_{M_1-\epsilon}} e^{\theta T(x)}\mu_1(dx)}{\int_{\cal X} e^{\theta T(x)}\mu_1(dx)} 
		&\leq \frac{\int_{S_{M_1-\epsilon}} e^{\theta T(x)}\mu_1(dx)}{\int_{{\cal X} \setminus S_{M_1-\epsilon/2}} e^{\theta T(x)}\mu_1(dx)}\\
		& \leq \frac{\mu_1(S_{M_1 - \epsilon})}{1 -  \mu_1(S_{M_1 - \epsilon/2})} \frac{e^{\theta(M_1 - \epsilon)}}{e^{\theta(M_1 - \epsilon/2)}},
	\end{align*}
	where the last inequality follows from the fact that $e^{
	\theta T(x)} \leq e^{\theta(M_1 - \epsilon)}$ on $S_{M_1-\epsilon}$ and  $e^{
	\theta T(x)} \geq e^{\theta(M_1 - \epsilon/2)}$ on $\calX \setminus S_{M_1 - \epsilon/2}$. The definition of essential supremum guarantees the denominator $1 -   \mu_1(S_{M_1 - \epsilon/2})$ is strictly positive. Therefore $p_\theta(S_{M_1 - \epsilon}) \rightarrow 0$ as $\theta \rightarrow \infty$, as desired. Similarly $\bE_\theta(T(x))$ goes to $m_1$  when $\theta$ goes to  $-\infty$.
		
	Given a prior density $\pi(\theta)$, the posterior distriubution can be formally written as:	
	\[
	\pi(\theta|x) \propto \pi(\theta) e^{\theta\cdot T(x) - \eta(\theta)}.
	\]
	For $\theta' > \theta > 0$, we have
	\begin{align*}
	\log \pi(\theta|x) - \log\pi(\theta'|x)   &= \log(\pi(\theta)) - \log(\pi(\theta')) + (\theta - \theta') T(x)  - (\eta(\theta) - \eta(\theta'))\\
	& = 	\log(\pi(\theta)) - \log(\pi(\theta'))  + (\theta  - \theta') (T(x) - \eta'(\xi))
	\end{align*}
	where $\xi \in [\theta, \theta']$.
	
	By assumption,  there exists $\theta_0 >0$ and $\alpha >0$ such that the first term $	\log(\pi(\theta)) - \log(\pi(\theta')) \geq \alpha (\theta' - \theta)$ for any $\theta' > \theta >\theta_0$. We can further choose $\epsilon = \frac \alpha 2$ and $\theta_1 > 0$ such that  $\eta'(\theta) > M_1  - \epsilon$ when $\theta > \theta_1$. Then, for any $\theta' > \theta >\max\{\theta_0,\theta_1\}$:
	\[
	\log \pi(\theta|x) - \log\pi(\theta'|x) \geq \frac \alpha  2 (\theta' - \theta),
	\]
	as desired. The proof for $\theta \in (-\infty, 0)$ is essentially the same. 
	
\end{proof}

Proposition \ref{prop: exponential family condition} includes many practical prior distributions, for example:
\begin{itemize}
	\item Any Gaussian prior (as discussed in Corollary \ref{cor:gaussian prior})
	\item Any finite mixture of Gaussian priors
	\item Conjugate prior $\pi_{n_0, t}(\theta) \propto e^{n_0(\theta t - \eta(\theta))},$
	with $n_0 > 0$ and $t \in (m_1, M_1)$
	\item Any finite mixture of conjugate priors:	
	$\pi(\theta)\propto \sum_{i=1}^n \lambda_i  \pi_{n_{0,i}, t_i},$
	with $n_{0,i} >0$, $t_i \in (m_1,M_1)$ for any $i$.
\end{itemize}

If the parameter space is artificially defined as $[0,\infty)$ or $(0, \infty)$ (for example, in Ising model, $\theta$ corresponds to the `inverse temperature' and is thus always positive), then Proposition \ref{prop: exponential family condition} still includes the `truncated version' of all  the  models mentioned above:

\begin{itemize}
	\item Any Gaussian prior (as discussed in Corollary \ref{cor:gaussian prior}) truncated at $[0,\infty)$
	\item Any finite mixture of truncated Gaussian priors
	\item Conjugate prior $\pi_{n_0, t}(\theta) \propto e^{n_0(\theta t - \eta(\theta))},$
	with $n_0 > 0$ and $t <  M_1$
	\item Any finite mixture of conjugate priors: 
	$\pi(\theta)\propto \sum_{i=1}^n \lambda_i  \pi_{n_{0,i}, t_i},$
	with $n_{0,i} >0$, $t < M_1$ for any $i$.
\end{itemize}

To summarize, this part concentrates on bridging the gap between theoretical results and  practical applications of the exchange algorithm. Our results guarantee that, under mild conditions, the exchange algorithm used in real applications is geometrically ergodic.  We hope this positive result will give practitioners `peace of mind' when applying the exchange algorithm.

\section{Conclusion}\label{sec:conclusion}
To summarize, the first part of our results focuses on analyzing the asymptotic variance of the exchange chain. The second part focuses on convergence speed. Our results justify the the theoretical usefulness of the exchange algorithm. When the original Metropolis--Hastings algorithm satisfies assumption $(\mathcal A)$ or $(\tilde{\mathcal A})$ and the likelihood function satisfies the assumption of Theorem \ref{thm:Geometric ergodic inhirit}, the exchange algorithm is proven to be geometrically ergodic and admits a $\sqrt n$-CLT for any square-integrable function. In particular, assumption $\mathcal A$  or $(\tilde{\mathcal A})$  is naturally satisfied in many practical applications, including but not limited to location models,  Ising models, and ERGMs. It is our hope that this paper can be used to fill some gaps between the Markov chain Monte Carlo theory and applications.

However, a lot more has to be done. The convergence analysis here is mostly based on spectral theory and is unable to provide  so-called `honest' bounds. That is, our results show the exchange algorithm converges to the stationary distribution at a geometric rate but does not give practical bounds on the rate.  For general Metropolis--Hastings algorithms, convergence rates are usually established by using the  `drift-and-minorization' approach of Rosenthal \cite{rosenthal1995minorization}. It is an outstanding open problem to establish a drift and minorization condition for the underlying exchange chain. Moreover, the exchange algorithm can be included in the  framework developed by Andrieu et al.\cite{andrieu2018utility}, therefore it would be interesting to investigate if the main results of this paper can be generalized to the general framework.

 Even though one could establish the drift and minorization conditions and get `honest' bounds, usually they are still far away from `practical bounds'. It would be a more ambitious project to sharpen the rates of convergence derived from the `drift-and-minorization' framework, which would be of independent interest and is beyond the scope of exchange algorithm. Admittedly, there is always a gap between theory and practice, but we hope that more `practical' theories can be established to fill this gap.

\section*{Acknowldegement}

The author would like to thank Persi Diaconis, Julia Palacios, Wing H.Wong, and Daniel Rudolf for helpful discussions and comments. The author would like to thank the Editor, the Associate Editor and two referees for their constructive suggestions.

%%%%%%%%%%%%%%%%%%%%%%%%%%%%%%%%%%%%%%%%%%%%%%
%% Supplementary Material, if any, should   %%
%% be provided in {supplement} environment  %%
%% with title and short description.        %%
%%%%%%%%%%%%%%%%%%%%%%%%%%%%%%%%%%%%%%%%%%%%%%

%%%%%%%%%%%%%%%%%%%%%%%%%%%%%%%%%%%%%%%%%%%%%%%%%%%%%%%%%%%%%
%%                  The Bibliography                       %%
%%                                                         %%
%%  imsart-???.bst  will be used to                        %%
%%  create a .BBL file for submission.                     %%
%%                                                         %%
%%  Note that the displayed Bibliography will not          %%
%%  necessarily be rendered by Latex exactly as specified  %%
%%  in the online Instructions for Authors.                %%
%%                                                         %%
%%  MR numbers will be added by VTeX.                      %%
%%                                                         %%
%%  Use \cite{...} to cite references in text.             %%
%%                                                         %%
%%%%%%%%%%%%%%%%%%%%%%%%%%%%%%%%%%%%%%%%%%%%%%%%%%%%%%%%%%%%%

%% if your bibliography is in bibtex format, uncomment commands:
\bibliographystyle{imsart-number} % Style BST file (imsart-number.bst or imsart-nameyear.bst)
\bibliography{bibliography}       % Bibliography file (usually '*.bib')

\end{document}